\documentclass[12pt,a4paper]{amsart}
\usepackage{a4wide}
\usepackage{amsmath}
\usepackage{amsthm}
\usepackage{amssymb}
\usepackage{amsthm}

\usepackage{graphicx}
\usepackage{color}
\usepackage{caption}
\usepackage{subcaption}
\usepackage{url}

\newtheorem{proposition}{Proposition}
\newtheorem{lemma}[proposition]{Lemma}

\newtheorem{conjecture}[proposition]{Conjecture}

\newtheorem{theorem}[proposition]{Theorem}

\theoremstyle{remark}
\newtheorem{remark}{Remark}
\newtheorem{example}{Example}

\allowdisplaybreaks[3]

\title{On the secrecy gain of $\ell$-modular lattices}
\author{Esa V. Vesalainen and Anne-Maria Ernvall-Hyt\"onen}
\address{Matematik och Statistik, {\AA}bo Akademi University, Domkyrkotorget 1, 20500 {\AA}bo, Finland}
\thanks{This work was supported by the Academy of Finland project 303820, and E.~V.~V. was supported by the Magnus Ehrnrooth Foundation, by the Basque Government through the BERC 2014--2017 program and by Spanish Ministry of Economy and Competitiveness MINECO: BCAM Severo Ochoa excellence accreditation SEV-2013-0323.}
\begin{document}
\maketitle
\begin{abstract}
We show that for every $\ell>1$, there is a counterexample to the $\ell$-modular secrecy function conjecture by Oggier, Sol\'e and Belfiore. These counterexamples all satisfy the modified conjecture by Ernvall-Hyt\"onen and Sethuraman. Furthermore, we provide a method to prove  or disprove the modified conjecture for any given $\ell$-modular lattice rationally equivalent to a suitable amount of copies of $\mathbb{Z}\oplus \sqrt{\ell}\,\mathbb{Z}$ with $\ell \in \{3,5,7,11,23\}$. We also provide a variant of the method for strongly $\ell$-modular lattices when $\ell\in \{6,14,15\}$.
\end{abstract}

\section{Introduction}
Wyner \cite{Wyner} introduced the wiretap channel as a discrete memoryless and possibly noisy broadcast channel, where the sender transmits confidential messages to a legitimate receiver in the presence of an eavesdropper. 
Belfiore and Oggier defined in \cite{belfioggiswire} the secrecy gain
\[
\max_{y\in \mathbb{R_+}}\frac{\Theta_{\mathbb{Z}^n}(y)}{\Theta_{\Lambda}(y)},
\]
where
\[
\Theta_{\Lambda}(y)=\sum_{x\in \Lambda}e^{-\pi\left\|x\right\|^2y}
\]
for $y\in\mathbb R_+$,
as a lattice invariant to measure how much confusion the eavesdropper will experience when a unimodular lattice $\Lambda$ is used in Gaussian wiretap coding. Here we have simplified notation by writing $\Theta_\Lambda(y)$ instead of the traditional $\Theta_\Lambda(yi)$ as this is more convenient. The function $\Xi_{\Lambda}(y)=\Theta_{\mathbb{Z}^n}(y)/\Theta_{\Lambda}(y)$ is called the secrecy function. Belfiore and Sol\'e \cite{belfisole} conjectured that the secrecy function attains its maximum at $y=1$. This function was further studied by Oggier, Sol\'e and Belfiore \cite{belfisoleoggis}, by Lin and Oggier \cite{oglinitw,oglinarxiv}, and by Ernvall-Hyt\"onen and Hollanti \cite{amecami}. A method to prove the conjecture for any given lattice was derived by Ernvall-Hyt\"onen \cite{ErnHyt}, and a new proof for this method was given by Pinchak and Sethuraman \cite{Julie,JulieSeth}.

An $n$-dimensional integral lattice $\Lambda$ is $\ell$-modular, where $\ell\in\mathbb Z_+$, if there exists a similarity $\sigma$ of $\mathbb R^n$ multiplying norms by $\ell$ and mapping $\Lambda^\ast$ to $\Lambda$. Oggier, Sol\'e and Belfiore \cite{belfisoleoggis} defined the secrecy function $\Xi_\Lambda$ for $\ell$-modular lattices by
\[
\Xi_\Lambda(y)=\frac{\Theta_{(\ell^{1/4}\mathbb{Z})^n}(y)}{\Theta_{\Lambda}(y)},
\] 
again for $y\in\mathbb R_+$,
when $n$ is the dimension of the lattice $\Lambda$. So, the lattice $\Lambda$ is compared against a cubic lattice with the same volume. This quantity for $2$-modular and $3$-modular lattices was studied by Lin, Oggier and Sol\'e \cite{LinOggSole}, and for $5$-modular lattices by Hou, Lin and Oggier \cite{houlinoggier}, and in the thesis by Lin \cite{lin:thesis}. Furthermore, it was conjectured that this function obtains its maximum at the natural symmetry point $1/\sqrt{\ell}$. It was proved by Ernvall-Hyt\"onen and Sethuraman that this is not always the case: they provided the $4$-modular lattice $2\mathbb{Z}\oplus \sqrt{2}\,\mathbb{Z}\oplus \mathbb{Z}$ as a counterexample. Strey \cite{strey} provided pictures of some other counterexamples. In this paper, we show that these values of $\ell$ are not exceptions: there are counterexamples for every integer value $\ell>1$.

Ernvall-Hyt\"onen and Sethuraman \cite{e-hsethuraman} suggested normalizing with a suitable power of the theta function of the lattice $D^{\ell}$, instead of the scaled cubic lattice:
\[
\widetilde{\Xi}_{\Lambda}(y)=\frac{\Theta_{D^{\ell}}^{n/2}(y)}{\Theta_{\Lambda}(y)},
\]
where $n$ is the dimension of the lattice $\Lambda$ and $y\in\mathbb R_+$. It is worth noting that $n$ is even unless $\ell$ is a square.
They conjectured that the secrecy function conjecture would be true when the normalization is done using this lattice instead of the cubic lattice. This normalization has the added benefit that the lattice $D^{\ell}$ is $\ell$-modular (not necessarily strongly $\ell$-modular). Furthermore, they gave a method to prove or disprove this modified conjecture for any given $2$-modular lattice.

We will show that whenever $\ell\in \{3,5,7,11,23\}$ and an $\ell$-modular lattice is rationally equivalent to the direct sum of $n/2$ copies of $D^{\ell}$, to check whether the function $\Theta_{\Lambda}$ satisfies the modified conjecture formulated by Ernvall-Hyt\"onen and Sethuraman, it suffices to check whether a certain polynomial obtains its minimal value on the interval
\[\left]0,\left(\frac{\eta\!\left(\frac{1}{2\sqrt{\ell}}\right)\eta\!\left(\frac{2}{\sqrt{\ell}}\right)\eta\!\left(\frac{\sqrt{\ell}}{2}\right)\eta\!\left(2\sqrt{\ell}\right)}{\eta^2\!\left(\frac{1}{\sqrt{\ell}}\right)\eta^2\!\left(\sqrt{\ell}\right)}\right)^{\alpha_\ell}\right]\]
at the right endpoint, where $\eta$ is the classical Dedekind $\eta$-function discussed in more detail later and the exponent $\alpha_\ell$ is defined precisely in Section \ref{polynomization}. 
The polynomial is obtained from the representation for $\Theta_\Lambda$ given by Rains and Sloane \cite{rainssloane}.

This paper is structured as follows: We first in Section \ref{convolutionidentities} give a useful convolution identity. We will then move in Section \ref{hernandezsethuraman} to the conjecture by Hernandez and Sethuraman about the behavior of the function $\vartheta_3$, and we determine exactly when the conjecture is true. In Section \ref{counterexamples} we give counterexamples to the original $\ell$-modular conjecture and prove that they satisfy the modified conjecture. In Section \ref{polynomization}, we give a method to prove or disprove the modified conjecture under certain conditions for any given $\ell$-modular lattice for certain values of $\ell$. The last two Sections \ref{technical-theta-section} and \ref{technical-eta-section} give some rather technical lemmas about $\vartheta_3$ and $\eta$ needed in the proofs.

\section{A convolution identity}\label{convolutionidentities}

The following simple convolution identity will be quite useful for us later. For any given real numbers $h$ and $k$ with $0\leqslant k<h$, we define the auxiliary trapezoid function $T(\cdot;k,h)\colon\mathbb R\longrightarrow\mathbb R$, depicted in Figure \ref{figure-trapezoid}, by setting
\[T(x;k,h)=\begin{cases}
0&\text{when $x\leqslant-h$,}\\
x+h&\text{when $-h\leqslant x\leqslant-k$,}\\
h-k&\text{when $-k\leqslant x\leqslant k$,}\\
h-x&\text{when $k\leqslant x\leqslant h$, and}\\
0&\text{when $x\geqslant h$,}
\end{cases}\]
for all $x\in\mathbb R$. Of course, for $k=0$ the trapezoid reduces to a triangle.
\begin{figure}[h]
\begin{center}
\includegraphics[scale=.8]{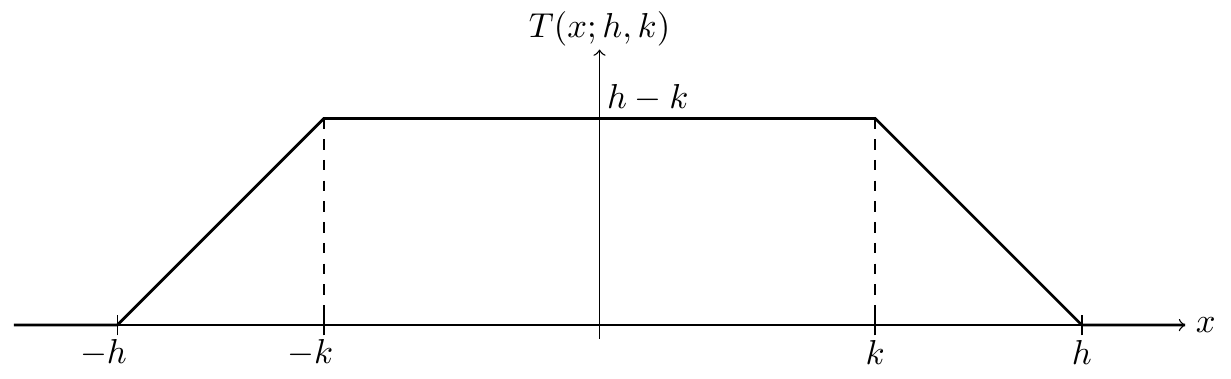}
\end{center}
\caption{\label{figure-trapezoid} The function $T(x;k,h)$.}
\end{figure}

\begin{lemma}\label{convolution-identity}
Let $f\colon\mathbb R\longrightarrow\mathbb R$ be a twice continuously differentiable function, and let $h,k\in\left[0,\infty\right[$ with $k<h$. Then
\[f(x+h)-f(x+k)-f(x-k)+f(x-h)
=\int\limits_{-\infty}^\infty f''(u)\,T(u-x;k,h)\,\mathrm du,\]
for all $x\in\mathbb R$.
\end{lemma}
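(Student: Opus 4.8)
The plan is to reduce the identity to the fundamental theorem of calculus by a single integration by parts. The starting observation is that, although the trapezoid $T(\cdot;k,h)$ is only piecewise linear, it is continuous, compactly supported, and absolutely continuous, so its derivative (defined almost everywhere) is the step function
\[\frac{\mathrm d}{\mathrm dt}T(t;k,h)=\mathbf 1_{(-h,-k)}(t)-\mathbf 1_{(k,h)}(t),\]
a difference of two indicator functions corresponding to the ascending and descending edges of the trapezoid, the middle plateau and the two exterior regions contributing nothing.

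Given this, the first step is to write, for fixed $x$,
\[\int_{-\infty}^\infty f''(u)\,T(u-x;k,h)\,\mathrm du=-\int_{-\infty}^\infty f'(u)\,\frac{\mathrm d}{\mathrm du}T(u-x;k,h)\,\mathrm du,\]
where the boundary terms vanish because $T(\cdot;k,h)$ has compact support and $f'$ is continuous. Substituting the step-function form of the derivative, with $T'(u-x;k,h)=\mathbf 1_{(x-h,x-k)}(u)-\mathbf 1_{(x+k,x+h)}(u)$, collapses the right-hand side to
\[-\int_{x-h}^{x-k}f'(u)\,\mathrm du+\int_{x+k}^{x+h}f'(u)\,\mathrm du.\]
Applying the fundamental theorem of calculus to each of these two integrals yields $f(x-h)-f(x-k)$ and $f(x+h)-f(x+k)$ respectively, and their sum is precisely the left-hand side of the claimed identity. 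The degenerate case $k=0$, where the trapezoid becomes a triangle and the plateau disappears, is covered without any change to the argument.

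The only point requiring care is the integration by parts across the four corners of $T$, where the integrand fails to be smooth, and I expect this to be the main (though modest) obstacle. One clean way to handle it is to invoke the absolute continuity of $T$, so that the almost-everywhere form of integration by parts applies directly. A more hands-on alternative, which I would include if a fully self-contained argument is preferred, is to split $\int_{-\infty}^\infty$ into the finitely many intervals on which $u\mapsto T(u-x;k,h)$ is affine, integrate by parts on each piece separately, and observe that the boundary contributions at the shared endpoints cancel in pairs thanks to the continuity of both $T$ and $f'$ there. Either route makes the computation routine once it is set up.
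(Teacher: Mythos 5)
Your proof is correct and is essentially the paper's own argument in a slightly repackaged form: the paper splits the integral into the three intervals on which $T(\cdot;k,h)$ is affine and integrates by parts on each, which is exactly the ``hands-on alternative'' you describe, and your primary route (one global integration by parts moving the derivative onto $T$, whose a.e.\ derivative is the step function $\mathbf 1_{(-h,-k)}-\mathbf 1_{(k,h)}$, followed by the fundamental theorem of calculus) is the same computation organized differently. No gap; the justification at the corners via absolute continuity, or equivalently via piecewise integration by parts with cancelling boundary terms, is all that is needed.
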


\begin{proof}
By straightforward calculation, we have
\begin{align*}
&\int\limits_{-\infty}^\infty f''(u)\,T(u-x;k,h)\,\mathrm du
=\int\limits_{-\infty}^\infty f''(x+u)\,T(u;k,h)\,\mathrm du\\
&=\int\limits_{-h}^{-k}f''(x+u)\,(u+h)\,\mathrm du
+\int\limits_{-k}^kf''(x+u)\,(h-k)\,\mathrm du
+\int\limits_k^hf''(x+u)\,(h-u)\,\mathrm du\\
&=\left.f'(x+u)\,(u+h)\vphantom{\Big|}\right]_{-h}^{u=-k}
-\int\limits_{-h}^{-k}f'(x+u)\,\mathrm du
+(h-k)\,f'(x+k)-(h-k)\,f'(x-k)\\&\qquad
+\left.f'(x+u)\,(h-u)\vphantom{\Big|}\right]_k^{u=h}
+\int\limits_k^hf'(x+u)\,\mathrm du\\
&=(h-k)\,f'(x-k)-f(x-k)+f(x-h)+(h-k)\,f'(x+k)\\&\qquad-(h-k)\,f'(x-k)-(h-k)\,f'(x+k)+f(x+h)-f(x+k)\\
&=f(x+h)-f(x+k)-f(x-k)+f(x-h),
\end{align*}
as claimed.
\end{proof}

\section{Conjecture of Hernandez and Sethuraman}\label{hernandezsethuraman}

The classical function $\vartheta_3$ is defined by setting
\[\vartheta_3(y)=\Theta_{\mathbb Z}(y)=\sum_{n\in\mathbb Z}e^{-\pi n^2y}=1+2\sum_{n=1}^\infty e^{-\pi n^2y}\]
for all $y\in\mathbb R_+$, where we have employed the usual simplification of notation by writing $\vartheta_3(y)$ instead of $\vartheta_3(yi)$. The results in Section \ref{counterexamples} depend on understanding the behaviour of certain kinds of expressions involving $\vartheta_3$.

Hernandez and Sethuraman \cite{hernandez,Sethuraman} made the following conjecture which is Conjecture 8 in \cite{hernandez}:
\begin{conjecture}[Hernandez and Sethuraman]\label{hernandez--sethuraman-conjecture} Let $a,b\in\mathbb R_+$. Then the expression
\[
\frac{\vartheta_3(y)\,\vartheta_3(aby)}{\vartheta_3(ay)\,\vartheta_3(by)},
\]
defined for $y\in\mathbb R_+$,
obtains its unique maximum at ${1}/{\sqrt{ab}}$.
\end{conjecture}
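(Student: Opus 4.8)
The plan is to exploit the Jacobi functional equation $\vartheta_3(1/y)=\sqrt{y}\,\vartheta_3(y)$ to pin down the natural symmetry. Writing $F(y)$ for the quotient in question and substituting $y\mapsto 1/(aby)$, each of the four factors picks up its square-root prefactor, and these cancel exactly, so that $F(1/(aby))=F(y)$. Hence $F$ is invariant under the involution $y\mapsto 1/(aby)$, whose unique fixed point is $y=1/\sqrt{ab}$; proving the conjecture amounts to showing that this symmetry point is the unique maximum.

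Next I would pass to logarithmic coordinates. Put $f(s)=\log\vartheta_3(e^s)$ and set $\alpha=\log a$, $\beta=\log b$; by the $a\leftrightarrow b$ symmetry assume $a\le b$. Then
\[
\log F(e^x)=f(x)+f(x+\alpha+\beta)-f(x+\alpha)-f(x+\beta),
\]
a symmetric second difference of $f$. Recentering by $w=x+\tfrac12(\alpha+\beta)=\log(y\sqrt{ab})$ and writing $h=\tfrac12(\alpha+\beta)$, $k=\tfrac12(\beta-\alpha)$, this becomes $f(w+h)-f(w+k)-f(w-k)+f(w-h)$. In the regime $1<a\le b$ one has $0\le k<h$, so Lemma~\ref{convolution-identity} applies and gives
\[
\log F(e^x)=G(w):=\int_{-\infty}^\infty f''(u)\,T(u-w;k,h)\,\mathrm du,
\]
with the symmetry point $y=1/\sqrt{ab}$ corresponding to $w=0$. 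As $(a,b)$ ranges over $1<a\le b$ the parameters $(h,k)$ sweep out all admissible trapezoids, and the remaining range $a\le1$ would be handled by the same device after accounting for the reversed ordering of $h$ and $k$.

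Two evenness facts then drive everything. Differentiating the functional equation $f(s)=-s/2+f(-s)$ twice shows that $f''$ is even, and $T(\cdot;k,h)$ is even by construction; hence $G=f''*T$ is even and $G'(0)=0$, recovering the symmetry. It then suffices to show $G(w)<G(0)$ for $w\ne0$. The clean sufficient condition is that $f''$ be symmetric and nonincreasing in $|u|$: since $T$ is itself symmetric and nonincreasing in $|u|$, it is a superposition of symmetric boxes $\mathbf 1_{[-t,t]}$, and this writes $G(w)$ as an average over $t>0$ of window integrals $\int_{w-t}^{w+t}f''(u)\,\mathrm du$. Each such window integral is maximized at $w=0$ once $f''$ is symmetric decreasing, with strictness coming from strict monotonicity, so $G$ attains its unique maximum at $w=0$.

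The hard part will be establishing that $f''$ really is symmetric unimodal. A direct computation gives
\[
f''(u)=\pi^2 e^{2u}\operatorname{Var}_y(n^2)-\pi e^{u}\,\mathbb E_y(n^2),\qquad y=e^u,
\]
where the mean and variance are taken with respect to the weights $e^{-\pi n^2 y}/\vartheta_3(y)$ on $n\in\mathbb Z$. This function is even and vanishes at both ends $u\to\pm\infty$, but it is not of constant sign, and its monotonicity in $|u|$ is far from clear; controlling the competition between the variance and mean terms as $u$ varies is the crux of the whole argument, since it is precisely the shape of this single even function that decides where the maximum of $F$ sits. I expect this monotonicity analysis of $f''$, rather than any of the preceding reductions, to be the main obstacle.
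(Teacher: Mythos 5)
Your reduction is sound as far as it goes, but the statement you set out to prove is false as stated, and your one-sentence dismissal of the range $a\leqslant 1$ is exactly where it breaks. With $\alpha=\log a$, $\beta=\log b$, $h=\tfrac12(\alpha+\beta)$, $k=\tfrac12(\beta-\alpha)$, the mixed regime $a<1<b$ gives $k>|h|$, so "accounting for the reversed ordering of $h$ and $k$" does not merely relabel the parameters: it flips the sign of the whole expression. Indeed, since the second difference is invariant under $h\mapsto-h$, one gets
\[
f(w+h)-f(w+k)-f(w-k)+f(w-h)=-\bigl(f(w+k)-f(w+|h|)-f(w-|h|)+f(w-k)\bigr),
\]
which by Lemma \ref{convolution-identity} is \emph{minus} a trapezoid convolution with parameters $0\leqslant|h|<k$. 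Your own argument then shows that $\log F$ has a strict global \emph{minimum} at the symmetry point $w=0$, not a maximum; and when $a=1$ or $b=1$ the quotient is constant. This is precisely the paper's conclusion: Theorem \ref{theta-quotients} establishes the unimodal behaviour of $\vartheta_3(\lambda y)\,\vartheta_3(y/\lambda)/(\vartheta_3(\kappa y)\,\vartheta_3(y/\kappa))$ for $1\leqslant\kappa<\lambda$, and the case analysis following it shows Conjecture \ref{hernandez--sethuraman-conjecture} holds when $1<a\leqslant b$ or $a\leqslant b<1$, but fails when $a<1<b$. A correct write-up must therefore split into cases and, in the mixed case, disprove rather than prove the conjectured maximum.

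Second, you have deferred the entire analytic content. Granting that $f''$ is even and strictly decreasing on $[0,\infty)$, your layer-cake decomposition $T(u;k,h)=\int_k^h\mathbf 1_{[-t,t]}(u)\,\mathrm dt$ and the window integrals $\int_{w-t}^{w+t}f''(u)\,\mathrm du$ do give a clean route to unimodality, essentially equivalent to the paper's: there one differentiates the convolution and uses that $f'''$ is odd and strictly negative on $\mathbb R_+$ (Lemma \ref{theta-min-technical-lemma}) together with the pointwise comparison $T(u+x;k,h)\leqslant T(u-x;k,h)$. But the symmetric strict monotonicity of $f''$ is the real substance of the theorem, and your mean/variance formula only restates it. The paper proves it via explicit truncations of $\vartheta_3$ and its first four derivatives (Lemma \ref{theta3-asymptotics-with-notation}), a numerically certified negativity of the fourth logarithmic derivative on $[0,\log(3/2)]$ checked on $500$ subintervals (Lemma \ref{technical-lemma-on-fourth-derivative}), and an explicit polynomial--exponential estimate valid for $y\geqslant3/2$ (Lemma \ref{technical-lemma-on-third-derivative}); it also notes the fact appears as Proposition 5.14 in Faulhuber's dissertation, which you could cite instead. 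Without either that work or such a citation, your proposal is a reduction, not a proof.
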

\noindent
This conjecture was supported by several illustrations. The following theorem implies that the conjecture is essentially true, except for some ranges of $a$ and $b$ where the behaviour of the expression is naturally opposite. In Section \ref{counterexamples}, Theorem \ref{theta-quotients} is used to both derive counterexamples to the original $\ell$-modular conjecture and to prove that the counterexamples satisfy the modified conjecture.

\begin{theorem}\label{theta-quotients}
Let $\kappa,\lambda\in\mathbb R_+$ with $1\leqslant\kappa<\lambda$, and let us define a function $g\colon\mathbb R_+\longrightarrow\mathbb R_+$ by setting
\[g(y)=\frac{\vartheta_3(\lambda y)\,\vartheta_3\!\left(y/\lambda\right)}{\vartheta_3(\kappa y)\,\vartheta_3(y/\kappa)}\]
for $y\in\mathbb R_+$. Then the function $g$ has a strict global maximum at the point $1$, is strictly increasing in $\left]0,1\right]$, and is strictly decreasing in $\left[1,\infty\right[$.
\end{theorem}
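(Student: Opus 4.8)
The plan is to pass to the additive (logarithmic) picture, where the four arguments of $g$ become symmetric shifts, and then to exploit the convolution identity of Lemma \ref{convolution-identity} together with two structural properties of $\log\vartheta_3$. Writing $y=e^t$ and setting $\phi(s)=\log\vartheta_3(e^s)$, $h=\log\lambda$ and $k=\log\kappa$, the hypothesis $1\leqslant\kappa<\lambda$ becomes $0\leqslant k<h$, and
\[G(t):=\log g(e^t)=\phi(t+h)+\phi(t-h)-\phi(t+k)-\phi(t-k).\]
The Jacobi functional equation $\vartheta_3(1/y)=\sqrt y\,\vartheta_3(y)$ yields $\phi(-s)=\phi(s)+s/2$, and substituting this into the four terms shows that the linear contributions cancel, so $G(-t)=G(t)$; equivalently $g(1/y)=g(y)$. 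Hence it suffices to prove that $G$ is strictly increasing on $\left]-\infty,0\right]$, as the behaviour on $\left[0,\infty\right[$ and the strict maximum at $t=0$ then follow by symmetry, and $g=e^G$ has the same monotonicity as $G$.

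Next I would apply Lemma \ref{convolution-identity} with $f=\phi$ and $x=t$, which represents $G$ as the convolution
\[G(t)=\int\limits_{-\infty}^\infty\phi''(u)\,T(u-t;k,h)\,\mathrm du.\]
Because the trapezoid has compact support, this is a genuine finite integral and differentiation under the integral sign is unproblematic; using $T'(\,\cdot\,;k,h)=\mathbf 1_{[-h,-k]}-\mathbf 1_{[k,h]}$ (valid away from the finitely many break points) and the substitution $w\mapsto-w$ on the first piece gives the clean formula
\[G'(t)=\int\limits_k^h\bigl(\phi''(t+w)-\phi''(t-w)\bigr)\,\mathrm dw.\]

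Everything now reduces to two facts about $\phi''$. First, differentiating $\phi(-s)=\phi(s)+s/2$ twice shows that $\phi''$ is an \emph{even} function. Second --- and this is the analytic heart of the argument --- $\phi''$ is \emph{strictly decreasing} on $\left[0,\infty\right[$ (equivalently $\phi'''<0$ on $\left]0,\infty\right[$). Granting these, fix $t<0$ and $w>0$: since $(t-w)^2-(t+w)^2=-4tw>0$ we have $\lvert t+w\rvert<\lvert t-w\rvert$, so evenness and strict monotonicity give $\phi''(t+w)=\phi''(\lvert t+w\rvert)>\phi''(\lvert t-w\rvert)=\phi''(t-w)$. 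Thus the integrand above is positive for every $w\in\left]0,h\right]$, and as $\left[k,h\right]$ has positive length this forces $G'(t)>0$ for all $t<0$, completing the proof modulo the two facts.

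The even symmetry of $\phi''$ is immediate, so the \textbf{main obstacle} is establishing the strict monotonicity of $\phi''=\tfrac{\mathrm d^2}{\mathrm ds^2}\log\vartheta_3(e^s)$ on the positive axis. This is a delicate statement about the logarithmic second derivative of $\vartheta_3$, and I would expect to prove it not by any soft convexity argument but through the careful quantitative estimates for $\vartheta_3$ and its derivatives collected in Section \ref{technical-theta-section}, controlling the rapidly converging series termwise and bounding the tail.
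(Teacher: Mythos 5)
Your proof is correct and follows the paper's strategy in all essentials: the same reduction via the modular symmetry $g(1/y)=g(y)$, the same logarithmic substitution, the same convolution identity (Lemma \ref{convolution-identity}), and the same key technical input, namely that $\phi''(s)=\frac{\mathrm d^2}{\mathrm ds^2}\log\vartheta_3(e^s)$ is even and strictly decreasing on $\left[0,\infty\right[$ --- this is precisely Lemma \ref{theta-min-technical-lemma}, proved in Section \ref{technical-theta-section} by exactly the kind of termwise and tail estimates you anticipate. The one place you diverge is the endgame. The paper differentiates the convolution onto the third derivative, obtaining $\int\phi'''(u)\,T(u-x;k,h)\,\mathrm du$, and then needs a case split ($x\geqslant h$ versus $0<x<h$) together with the oddness of $\phi'''$ and the geometric inequality $T(u+x;k,h)\leqslant T(u-x;k,h)$. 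You instead differentiate the trapezoid (equivalently, integrate by parts), arriving at $G'(t)=\int_k^h\bigl(\phi''(t+w)-\phi''(t-w)\bigr)\,\mathrm dw$, which is settled in one line from $\lvert t+w\rvert<\lvert t-w\rvert$ for $t<0$, $w>0$, with no case analysis. Your version is cleaner, and it in fact shows that the convolution identity is dispensable for this theorem: the formula $G'(t)=\int_k^h\phi''(t+w)\,\mathrm dw-\int_k^h\phi''(t-w)\,\mathrm dw$ follows directly from the fundamental theorem of calculus applied to $\phi'(t+h)-\phi'(t+k)$ and $\phi'(t-k)-\phi'(t-h)$. The only point to tidy up is the justification of $T'(\cdot\,;k,h)=\mathbf 1_{[-h,-k]}-\mathbf 1_{[k,h]}$ in the integration by parts: since $T$ is only piecewise linear, either invoke its Lipschitz continuity (absolute continuity suffices), or bypass the issue entirely by the direct computation just mentioned.
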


\begin{figure}[h]
\begin{center}
\includegraphics[scale=.6]{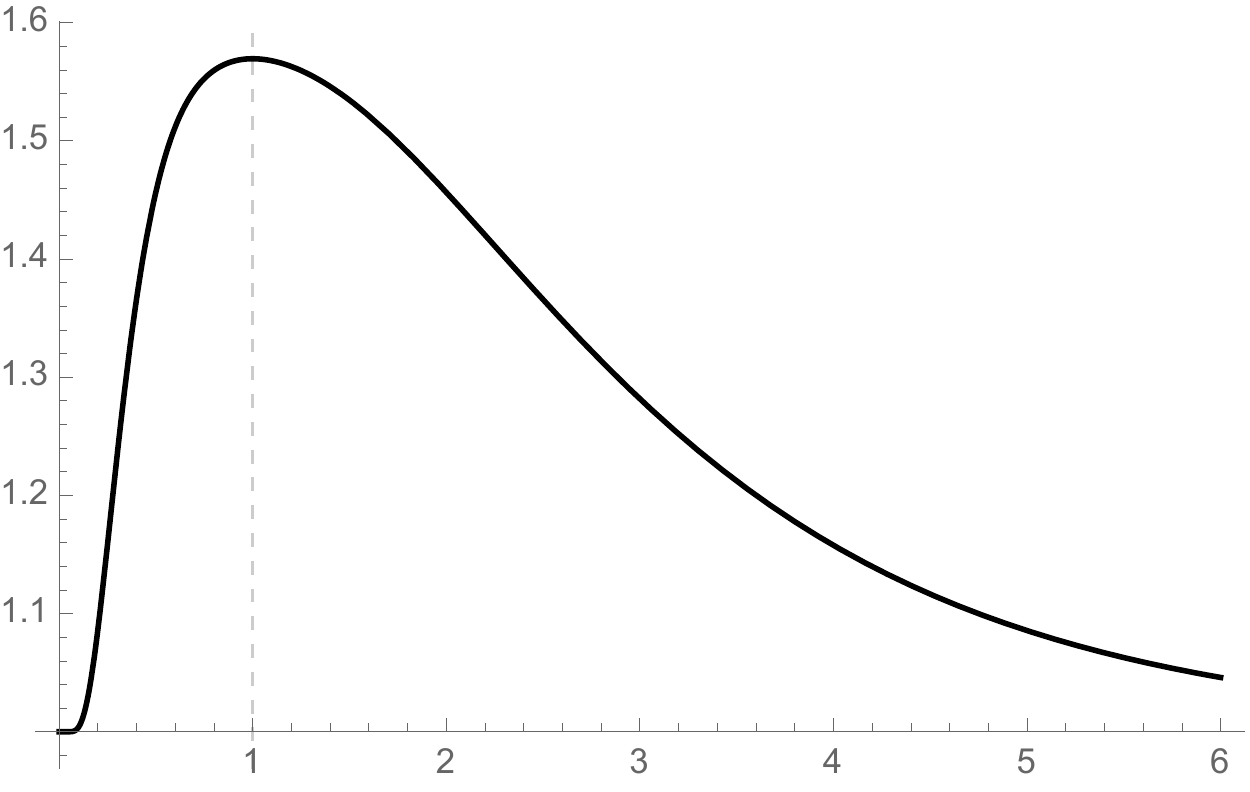}
\end{center}
\caption{\label{figure-theta-quotient} The function $g(y)$ of Theorem \ref{theta-quotients} with $\kappa=2$ and $\lambda=5$.}
\end{figure}

The proof of the theorem requires some very technical lemmas which are stated and proved in Section \ref{technical-theta-section}.

\begin{proof}
Let us first observe that by the modularity relation satisfied by $\vartheta_3$, we have
\[\vartheta_3\!\left(\frac1y\right)=\sqrt y\,\vartheta_3(y)\]
for all $y\in\mathbb R_+$, and that this implies easily that
\[g\!\left(\frac1y\right)=g(y)\]
for all $y\in\mathbb R_+$. The function $g$ is also clearly real-analytic. Thus, it is enough to prove that $g$ is strictly decreasing in $\left]1,\infty\right[$.

Let us write $f(x)=\log\vartheta_3(e^x)$, $x=\log y$, $h=\log\lambda$ and $k=\log\kappa$. Then
\[\log g(y)=\log\frac{\vartheta_3(\lambda y)\,\vartheta_3\!\left(y/\lambda\right)}{\vartheta_3(\kappa y)\,\vartheta_3(y/\kappa)}
=f(x+h)-f(x+k)-f(x-k)+f(x-h),\]
and we need to prove that the last expression is strictly decreasing for $x\in\mathbb R_+$. By Lemma \ref{convolution-identity}, we may rewrite this expression as a convolution against $T(\cdot;k,h)$ as
\[f(x+h)-f(x+k)-f(x-k)+f(x-h)
=\int\limits_{-\infty}^\infty f''(u)\,T(u-x;k,h)\,\mathrm du.\]

Since $f''$ is infinitely smooth and $T(\cdot;k,h)$ is continuous and compactly supported, the last integral is differentiable with derivative
\[\frac{\mathrm d}{\mathrm dx}\int\limits_{-\infty}^\infty f''(u)\,T(u-x;k,h)\,\mathrm du
=\int\limits_{-\infty}^\infty f'''(u)\,T(u-x;k,h)\,\mathrm du,\]
and the problem is reduced to proving that the last integral is strictly negative for $x\in\mathbb R_+$. Before embarking on this, let us invoke Lemma \ref{theta-min-technical-lemma} which says that the third derivative $f'''$ is an odd function on $\mathbb R$ and strictly negative in $\mathbb R_+$.

Let us now start treating the integral by observing that the function $T(\cdot-x;k,h)$ is supported on $\left[x-h,x+h\right]$, and so
\[\int\limits_{-\infty}^\infty f'''(u)\,T(u-x;k,h)\,\mathrm du
=\int\limits_{x-h}^{x+h}f'''(u)\,T(u-x;k,h)\,\mathrm du.\]
In particular, if $x\geqslant h$, then the integrand is strictly negative in $\left]x-h,x+h\right[$. Thus, we may assume that $x\in\left]0,h\right[$.

We now split the integral into pieces and rearrange them, remembering that $f'''(\cdot)$ is an odd function, and that $T(\cdot;k,h)$ is an even function, to get
\begin{align*}
&\int\limits_{x-h}^{x+h}f'''(u)\,T(u-x;k,h)\,\mathrm du\\
&=\int\limits_{x-h}^0f'''(u)\,T(u-x;k,h)\,\mathrm du
+\int\limits_0^{h-x}\,f'''(u)\,T(u-x;k,h)\,\mathrm du
+\int\limits_{h-x}^{h+x}f'''(u)\,T(u-x;k,h)\,\mathrm du\\
&=-\int\limits_0^{h-x}f'''(u)\,T(u+x;k,h)\,\mathrm du
+\int\limits_0^{h-x}f'''(u)\,T(u-x;k,h)\,\mathrm du
+\int\limits_{h-x}^{h+x}f'''(u)\,T(u-x;k,h)\,\mathrm du\\
&=\int\limits_0^{h-x}f'''(u)\left(T(u-x;k,h)-T(u+x;k,h)\right)\mathrm du
+\int\limits_{h-x}^{h+x}f'''(u)\,T(u-x;k,h)\,\mathrm du.
\end{align*}
Here the last integral is again strictly negative since the integrand is, and so it is enought to prove that the second to the last integral is nonnegative. Since $f'''$ is again strictly negative in the integrand in $\left]0,h-x\right[$, it is enough to prove that $T(u+x;k,h)\leqslant T(u-x;k,h)$ for $u\in\left]0,h-x\right[$. But this last inequality is geometrically obvious, and we are done.
\end{proof}

Let us now look at the original conjecture. Assume that $a,b\in\mathbb R_+$ with $a\leqslant b$. If $a=1$ or $b=1$ then the $\vartheta_3$-expression reduces to a constant, so let us assume that $a\neq1$ and $b\neq1$. We can write the expression as
\[
\frac{\vartheta_3(y)\,\vartheta_3(aby)}{\vartheta_3(ay)\,\vartheta_3(by)}
=\frac{\vartheta_3\!\left(\frac{1}{\sqrt{ab}}\,(\sqrt{ab}\,y)\right)\vartheta_3\!\left(\sqrt{ab}\,(\sqrt{ab}\,y)\right)}
{\vartheta_3\!\left(\sqrt{\frac{\vphantom ba}{b}}\,(\sqrt{ab}\,y)\right)\vartheta_3\!\left(\sqrt{\frac{b}{\vphantom ba}}\,(\sqrt{ab}\,y)\right)}.
\]
Clearly $\sqrt{b/a}\geqslant 1\geqslant \sqrt{a/b}$. If $1<a\leqslant b$, then
\[\sqrt{ab}>\sqrt{\frac{b}{\vphantom ba}}\geqslant \sqrt{\frac{\vphantom ba}{b}}>\frac{1}{\sqrt{ab}},\]
and Conjecture \ref{hernandez--sethuraman-conjecture} holds by Theorem \ref{theta-quotients}. Similarly, if $a\leqslant b<1$, then
\[
\frac{1}{\sqrt{ab}}>\sqrt{\frac{b}{\vphantom ba}}\geqslant \sqrt{\frac{\vphantom ba}{b}}>\frac{1}{\sqrt{ab}},
\]
and again Conjecture \ref{hernandez--sethuraman-conjecture} holds by Theorem \ref{theta-quotients}.
Finally, if $a<1<b$, then
\[
\sqrt{\frac{b}{\vphantom ba}}>\sqrt{ab}>\sqrt{\frac{\vphantom ba}b},
\quad\text{and}\quad
\sqrt{\frac b{\vphantom ba}}>\frac{1}{\sqrt{ab}}>\sqrt{\frac{\vphantom ba}{b}},
\]
and the conjecture is not true, because, by Theorem \ref{theta-quotients}, the function obtains a minimum instead of a maximum at the points $1/\sqrt{ab}$.

\section{Counterexamples}\label{counterexamples}

We would like to start this section by giving very elementary counterexamples for the original conjecture. Let $\ell\in\mathbb Z_+$ and $\ell\geqslant2$, and let us consider the $\ell$-modular lattice
\[D^{\ell}=\mathbb Z\oplus\sqrt\ell\,\mathbb Z.\]
The secrecy function attached to this lattice is
\[\Xi_{D^{\ell}}(y)=\frac{\Theta_{(\ell^{1/4}\mathbb Z)^2}(y)}{\Theta_{D^{\ell}}(y)}=\frac{\vartheta_3^2(y\sqrt\ell)}{\vartheta_3(y)\,\vartheta_3(y\ell)},\]
defined for $y\in\mathbb R_+$. The secrecy function conjecture states that the function $\Xi_{D^{\ell}}$ should have a global maximum at the natural symmetry point $1/\sqrt\ell$. Our goal here is to prove that this is not the case.

\begin{theorem}\label{theta-ell-corollary}
Let $\ell\in\mathbb Z_+$ and $\ell\geqslant2$. Then the lattice $D^{\ell}$ does not satisfy the secrecy function conjecture. More precisely, the secrecy function $\Xi_{D^{\ell}}$ does not have a global maximum at the point $1/\sqrt\ell$.
\end{theorem}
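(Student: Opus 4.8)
The plan is to recognize $\Xi_{D^{\ell}}$ as the reciprocal of a special case of the Hernandez--Sethuraman expression and then hand the analytic work over to Theorem \ref{theta-quotients}. First I would observe that since $\vartheta_3$ is a sum of strictly positive terms, $\Xi_{D^{\ell}}$ is positive on $\mathbb R_+$ and hence has a well-defined reciprocal, and that
\[
\frac{1}{\Xi_{D^{\ell}}(y)}=\frac{\vartheta_3(y)\,\vartheta_3(\ell y)}{\vartheta_3^2(\sqrt\ell\,y)}
=\frac{\vartheta_3(y)\,\vartheta_3(aby)}{\vartheta_3(ay)\,\vartheta_3(by)}\Bigg|_{a=b=\sqrt\ell}.
\]
Thus the reciprocal secrecy function is exactly the Hernandez--Sethuraman quotient with $a=b=\sqrt\ell$. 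Because $\ell\geqslant2$ we have $1<a\leqslant b$, which is precisely the regime discussed immediately after Theorem \ref{theta-quotients}.

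Next I would apply the substitution $z=\sqrt{ab}\,y=\sqrt\ell\,y$ together with the rewriting already recorded in the excerpt. With $a=b=\sqrt\ell$ we get $\sqrt{a/b}=\sqrt{b/a}=1$ and $\sqrt{ab}=\sqrt\ell$, so that
\[
\frac{1}{\Xi_{D^{\ell}}(y)}
=\frac{\vartheta_3(\sqrt\ell\,z)\,\vartheta_3(z/\sqrt\ell)}{\vartheta_3^2(z)}
=g(z),
\]
where $g$ is the function of Theorem \ref{theta-quotients} with parameters $\kappa=1$ and $\lambda=\sqrt\ell$. The hypotheses $1\leqslant\kappa<\lambda$ hold, since $\kappa=1$ and $\lambda=\sqrt\ell>1$ and the boundary value $\kappa=1$ is admitted by the theorem. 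Consequently $g$, and hence $1/\Xi_{D^{\ell}}$, has a strict global maximum at $z=1$, that is, at $\sqrt\ell\,y=1$, i.e. at the symmetry point $y=1/\sqrt\ell$.

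Finally I would transfer this conclusion back to $\Xi_{D^{\ell}}$ itself. A strict global maximum of the positive function $1/\Xi_{D^{\ell}}$ at $1/\sqrt\ell$ corresponds to a strict global minimum of $\Xi_{D^{\ell}}$ at $1/\sqrt\ell$; and since $g$ is nonconstant, being strictly increasing on $\left]0,1\right]$ and strictly decreasing on $\left[1,\infty\right[$, this minimum is genuine. In particular $\Xi_{D^{\ell}}$ cannot attain a global maximum at $1/\sqrt\ell$, which contradicts the secrecy function conjecture and proves the theorem.

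I expect no serious obstacle here: all the real analysis---the convolution identity of Lemma \ref{convolution-identity}, the oddness and strict negativity of $f'''$, and the resulting monotonicity of $g$---is already carried out in Theorem \ref{theta-quotients}. The only points demanding care are bookkeeping ones, namely matching $a=b=\sqrt\ell$ to the parameters $\kappa=1$, $\lambda=\sqrt\ell$, confirming that the degenerate value $\kappa=1$ lies within the theorem's hypotheses, and correctly converting ``strict maximum of the reciprocal'' into ``strict minimum of $\Xi_{D^{\ell}}$.''
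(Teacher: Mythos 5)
Your proof is correct, but it takes a genuinely different route from the paper's. The paper proves this theorem in a deliberately elementary way: it observes that $\Xi_{D^{\ell}}(y)\to1$ as $y\to\infty$, and then shows via Lemma \ref{theta-ell-lemma} that $\Xi_{D^{\ell}}(1/\sqrt\ell)<1$, the latter reducing by the modularity relation and $\vartheta_3>1$ to the single numerical check $\vartheta_3^8(1)\approx1.9410<2\leqslant\ell$. That argument needs nothing from Section \ref{hernandezsethuraman}. You instead identify $1/\Xi_{D^{\ell}}$, after the substitution $z=\sqrt\ell\,y$, with the function $g$ of Theorem \ref{theta-quotients} for $\kappa=1$, $\lambda=\sqrt\ell$ (the boundary case $\kappa=1$ is indeed admitted by that theorem's hypothesis $1\leqslant\kappa<\lambda$), and conclude that $\Xi_{D^{\ell}}$ has a strict global minimum at $1/\sqrt\ell$. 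This is exactly the $n=2$ specialization of the paper's own Theorem \ref{theta-ell-min-general}, proved immediately afterwards by the same device. What your route buys is a strictly stronger conclusion (a unique global minimum at the symmetry point, with full monotonicity on either side) rather than the bare negation of the conjecture; what it costs is dependence on the full analytic machinery behind Theorem \ref{theta-quotients} (the convolution identity and the technical $\vartheta_3$ lemmas of Section \ref{technical-theta-section}), where the paper's proof of this particular statement is essentially a two-line computation. Your final step, converting a strict global maximum of the nonconstant positive function $1/\Xi_{D^{\ell}}$ into a strict global minimum of $\Xi_{D^{\ell}}$ and hence into the failure of a global maximum there, is handled correctly.
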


\begin{proof}
On the one hand,
\[\Xi_{D^{\ell}}(y)\longrightarrow1\quad\text{as}\quad y\longrightarrow\infty.\] On the other hand, Lemma \ref{theta-ell-lemma} below tells us that \[\Xi_{D^{\ell}}\!\left(\frac1{\sqrt\ell}\right)<1.\]
Thus, clearly the secrecy function cannot have a global maximum at the point $1/\sqrt\ell$.
\end{proof}

\begin{lemma}\label{theta-ell-lemma}
Let $\ell\in\mathbb Z_+$ and $\ell\geqslant2$. Then we have
\[\Xi_{D^{\ell}}\!\left(\frac1{\sqrt\ell}\right)<1.\]
\end{lemma}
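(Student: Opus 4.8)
The plan is to reduce the claim to a single application of Theorem \ref{theta-quotients}. First I would evaluate the left-hand side explicitly. Since $\Xi_{D^\ell}(y)=\vartheta_3^2(y\sqrt\ell)/(\vartheta_3(y)\,\vartheta_3(y\ell))$, substituting $y=1/\sqrt\ell$ collapses all three arguments: here $y\sqrt\ell=1$, $y=1/\sqrt\ell$, and $y\ell=\sqrt\ell$, so that
\[\Xi_{D^\ell}\!\left(\frac1{\sqrt\ell}\right)=\frac{\vartheta_3^2(1)}{\vartheta_3(1/\sqrt\ell)\,\vartheta_3(\sqrt\ell)}.\]
Thus the inequality to be proved is equivalent to the clean statement $\vartheta_3(1/\sqrt\ell)\,\vartheta_3(\sqrt\ell)>\vartheta_3^2(1)$.

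Next I would recognise the left-hand side as a value of the function from Theorem \ref{theta-quotients} at $y=1$. Taking $\kappa=1$ and $\lambda=\sqrt\ell$ — permitted since $\ell\geqslant2$ forces $1=\kappa<\sqrt\ell=\lambda$ — the theorem applies to
\[g(y)=\frac{\vartheta_3(\sqrt\ell\,y)\,\vartheta_3(y/\sqrt\ell)}{\vartheta_3^2(y)},\]
and guarantees that $g$ is strictly decreasing on $\left[1,\infty\right[$ with a strict global maximum at $y=1$. In particular $g(1)=\vartheta_3(\sqrt\ell)\,\vartheta_3(1/\sqrt\ell)/\vartheta_3^2(1)$ is exactly the quantity I want to bound below by $1$.

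Finally I would pin down the correct normalisation by examining the behaviour at infinity. As $y\to\infty$ each of the three theta factors tends to $1$, so $g(y)\to1$. Combining this with the strict monotonicity supplied by Theorem \ref{theta-quotients} gives $g(1)>1$, that is, $\vartheta_3(1/\sqrt\ell)\,\vartheta_3(\sqrt\ell)>\vartheta_3^2(1)$, whence $\Xi_{D^\ell}(1/\sqrt\ell)<1$.

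The main obstacle is precisely this last step: upgrading the qualitative output of Theorem \ref{theta-quotients} to the \emph{strict} inequality $g(1)>1$, rather than the weaker $g(1)\geqslant1$. I would argue as follows. Strict monotonicity on $\left[1,\infty\right[$ together with $\lim_{y\to\infty}g(y)=1$ first yields $g(y)\geqslant1$ for every finite $y\geqslant1$, by letting the second argument tend to infinity in $g(y)>g(y')$ for $y'>y$. If equality held at some finite $y_0$, then any $y'>y_0$ would give $g(y')<g(y_0)=1$, contradicting the bound $g(y')\geqslant1$ just established. Hence $g(1)>1$ and the lemma follows. Everything else is a direct substitution, so no genuinely hard estimate is required.
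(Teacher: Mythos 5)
Your proof is correct, but it takes a genuinely different route from the one the paper gives for this lemma. The paper's argument is deliberately elementary and self-contained: after the same reduction to $\vartheta_3^2(1)<\vartheta_3(1/\sqrt\ell)\,\vartheta_3(\sqrt\ell)$, it applies the modularity relation to write the right-hand side as $\ell^{1/4}\vartheta_3^2(\sqrt\ell)>\ell^{1/4}$, and then checks numerically that $\vartheta_3^8(1)\approx1.9410<2\leqslant\ell$, so $\vartheta_3^2(1)<\ell^{1/4}$. You instead invoke Theorem \ref{theta-quotients} with $\kappa=1$, $\lambda=\sqrt\ell$ (legitimate, since the theorem allows $\kappa=1$, where the trapezoid degenerates to a triangle), identify $g(1)$ with the quantity to be bounded, and extract strictness from the strict decrease on $\left[1,\infty\right[$ together with $g(y)\to1$; your handling of that last step (e.g.\ $g(1)>g(2)\geqslant\lim_{y\to\infty}g(y)=1$) is sound. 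The trade-off: your argument leans on the heavy machinery of Section \ref{technical-theta-section} underlying Theorem \ref{theta-quotients}, and is in substance the $n=2$ special case of the paper's later Theorem \ref{theta-ell-min-general}, which delivers the full monotonicity picture of $\Xi_{D^\ell}$; the paper's proof of this particular lemma buys independence from all of that at the price of one small numerical evaluation.
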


\begin{proof}
We need to prove that
\[\vartheta_3^2(1)<\vartheta_3\!\left(\frac1{\sqrt\ell}\right)\vartheta_3(\sqrt\ell).\]
We recall that the $\vartheta$-function $\vartheta_3$ satisfies the modularity relation
\[\vartheta_3\!\left(\frac1y\right)=\sqrt y\,\vartheta_3(y)\]
for all $y\in\mathbb R_+$. Given this modularity relation, and the fact that $\vartheta_3(y)>1$ for all $y\in\mathbb R_+$, we have
\[\vartheta_3\!\left(\frac1{\sqrt\ell}\right)\vartheta_3(\sqrt\ell)
=\ell^{1/4}\,\vartheta_3^2(\sqrt\ell)>\ell^{1/4}.\]
On the other hand, it is easy to compute numerically that
\[\vartheta_3^8(1)\approx1.9410\ldots<2\leqslant\ell,\]
and so $\vartheta_3^2(1)<\ell^{1/4}$, and we are done.
\end{proof}

We can use the more advanced Theorem  \ref{theta-quotients} to prove the following counterexamples. The purpose of the condition $a_1<\sqrt\ell$  in the following theorem is just to exclude the uninteresting case $L=(\sqrt\ell\,\mathbb Z)^{n+1}$ in which the secrecy function $\Xi_L$ is a constant function. The requirement that $1<a_k<\ell$ for some $k$ in Theorem \ref{modified-theta-ell-max-general} serves the same purpose.

\begin{theorem}\label{theta-ell-min-general}
Let $\ell,n\in\mathbb Z_+$ with $\ell\geqslant2$, and let us be given integers
\[1\leqslant a_1\leqslant a_2\leqslant a_3\leqslant\ldots\leqslant a_n\leqslant\ell,\]
such that $a_1<\sqrt\ell$ and $a_ka_{n+1-k}=\ell$ for each $k\in\left\{1,2,\ldots,n\right\}$. Then the secrecy function $\Xi_L$ of the $\ell$-modular lattice
\[L=\bigoplus_{k=1}^n\sqrt{a_k}\,\mathbb Z\]
has a unique global minimum at the point $1/\sqrt\ell$, is strictly decreasing in $\bigl]0,1/\sqrt\ell\,\bigr]$ and strictly increasing in $\bigl[1/\sqrt\ell,\infty\bigr[$.
\end{theorem}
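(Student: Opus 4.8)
The plan is to reduce the statement to Theorem \ref{theta-quotients} by factoring the secrecy function into a product of theta quotients of exactly the shape treated there. First I would record the relevant theta series. Since $\Theta_{\sqrt a\,\mathbb Z}(y)=\vartheta_3(ay)$ for any $a>0$, the lattice $L=\bigoplus_{k=1}^n\sqrt{a_k}\,\mathbb Z$ has $\Theta_L(y)=\prod_{k=1}^n\vartheta_3(a_ky)$, while the reference cubic lattice gives $\Theta_{(\ell^{1/4}\mathbb Z)^n}(y)=\vartheta_3^n(\sqrt\ell\,y)$, so that
\[\Xi_L(y)=\frac{\vartheta_3^n(\sqrt\ell\,y)}{\prod_{k=1}^n\vartheta_3(a_ky)}.\]
It is cleaner to study the reciprocal $1/\Xi_L$ and to change variables to $z=\sqrt\ell\,y$, writing $\mu_k=a_k/\sqrt\ell$, so that the symmetry point $y=1/\sqrt\ell$ becomes $z=1$ and
\[\frac{1}{\Xi_L(y)}=\prod_{k=1}^n\frac{\vartheta_3(\mu_kz)}{\vartheta_3(z)}.\]

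Next I would exploit the pairing $a_ka_{n+1-k}=\ell$, which translates into $\mu_k\mu_{n+1-k}=1$, i.e. $\mu_{n+1-k}=1/\mu_k$. Grouping the factor of index $k$ with that of index $n+1-k$ yields, for each pair with $k<n+1-k$,
\[\frac{\vartheta_3(\mu_kz)}{\vartheta_3(z)}\cdot\frac{\vartheta_3(z/\mu_k)}{\vartheta_3(z)}=\frac{\vartheta_3(\mu_kz)\,\vartheta_3(z/\mu_k)}{\vartheta_3^2(z)},\]
which is precisely the function $g$ of Theorem \ref{theta-quotients} with $\kappa=1$ and $\lambda=1/\mu_k$. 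The ordering $a_1\leqslant\cdots\leqslant a_n$ forces $a_k^2\leqslant a_ka_{n+1-k}=\ell$ whenever $k\leqslant n+1-k$, hence $\mu_k\leqslant1$ and $\lambda=1/\mu_k\geqslant1=\kappa$, so the parameters are admissible. When $n$ is odd the leftover self-paired middle index has $\mu_{(n+1)/2}=1$ and contributes the constant factor $1$.

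With this in hand, $1/\Xi_L$, viewed as a function of $z$, is a finite product of positive factors each of which is either (i) strictly increasing on $\left]0,1\right]$ and strictly decreasing on $\left[1,\infty\right[$ with a strict maximum at $z=1$, by Theorem \ref{theta-quotients}, whenever $a_k<\sqrt\ell$ so that $\lambda=1/\mu_k>1=\kappa$; or (ii) identically equal to $1$ in the degenerate cases $a_k=\sqrt\ell$ (where $\lambda=\kappa=1$ is not covered by the theorem but the factor is trivially constant) and the odd middle index. A product of positive functions that are all nondecreasing (resp. nonincreasing) on an interval, at least one of them strictly, is itself strictly increasing (resp. decreasing); the hypothesis $a_1<\sqrt\ell$ guarantees $\mu_1<1$, hence the $k=1$ factor is of type (i) and supplies the required strictness. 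Consequently $1/\Xi_L$ has a strict global maximum at $z=1$, is strictly increasing on $\left]0,1\right]$ and strictly decreasing on $\left[1,\infty\right[$. Since $z=\sqrt\ell\,y$ is an increasing reparametrization and $\Xi_L$ is the positive reciprocal of $1/\Xi_L$, translating back gives a unique global minimum of $\Xi_L$ at $y=1/\sqrt\ell$, strict decrease on $\bigl]0,1/\sqrt\ell\,\bigr]$ and strict increase on $\bigl[1/\sqrt\ell,\infty\bigr[$.

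The analytic heart of the matter — that each single theta quotient is unimodal with peak at the symmetry point — is already delivered by Theorem \ref{theta-quotients}, so no new estimates are needed. The one point demanding care is the bookkeeping around the pairing: verifying that every paired factor is genuinely an instance of $g$ with admissible parameters $1\leqslant\kappa<\lambda$, and that the degenerate constant factors (the odd middle term and any pair with $a_k=\sqrt\ell$) are harmless and do not spoil the strict monotonicity that the $k=1$ factor enforces.
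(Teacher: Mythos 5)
Your proof is correct and follows essentially the same route as the paper: both reduce the claim to Theorem \ref{theta-quotients} by pairing the factor indexed by $k$ with that indexed by $n+1-k$ and recognizing each pair as an instance of $g$ with $\kappa=1$ and $\lambda=\sqrt\ell/a_k$. The paper merely streamlines the bookkeeping you do by hand (the odd middle index, the degenerate $a_k=\sqrt\ell$ factors) by working with $\Xi_L^2$, so that the product over all $k$ automatically counts each pair twice.
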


\begin{proof}
Since $\Xi_L$ only takes positive values, we may consider its square which may be written in the form
\[\Xi_L^2\!\left(\frac y{\sqrt\ell}\right)=
\prod_{k=1}^n\frac{\vartheta_3^2(y)}{\vartheta_3(a_ky/\sqrt\ell)\,\vartheta_3(a_{n+1-k}y/\sqrt\ell)},\]
and the desired properties of $\Xi_L$ follow from Theorem \ref{theta-quotients}.
\end{proof}

\begin{remark}
In \cite{Faulhuber--Steinerberger} Faulhuber and Steinerberger proved that the function $\log\vartheta_3(e^x)$ is strictly convex for $x\in\mathbb R$, which is actually enough to see that the above lattices $L$ are counterexamples, as Jensen's inequality directly gives the weaker conclusion $\Xi_L(y)<1$ for all $y\in\mathbb R_+$.
\end{remark}

\begin{theorem}\label{modified-theta-ell-max-general}
Let $\ell,n\in\mathbb Z_+$ with $\ell\geqslant2$, and let us be given integers
\[1\leqslant a_1\leqslant a_2\leqslant a_3\leqslant\ldots\leqslant a_n\leqslant\ell,\]
such that and $a_ka_{n+1-k}=\ell$ for each $k\in\left\{1,2,\ldots,n\right\}$ and that $1<a_k<\ell$ for at least one $k\in\left\{1,2,\ldots,n\right\}$. Then the modified secrecy function $\widetilde\Xi_L$ of the $\ell$-modular lattice
\[L=\bigoplus_{k=1}^n\sqrt{a_k}\,\mathbb Z\]
has a unique global maximum at the point $1/\sqrt\ell$, is strictly increasing in $\bigl]0,1/\sqrt\ell\,\bigr]$, and strictly decreasing in $\bigl[1/\sqrt\ell,\infty\bigr[$.
\end{theorem}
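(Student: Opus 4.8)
The plan is to mirror the proof of Theorem \ref{theta-ell-min-general}, exploiting the fact that $\widetilde\Xi_L$ takes only positive values, so that it suffices to analyze its square. Writing $\Theta_{D^{\ell}}(y)=\vartheta_3(y)\,\vartheta_3(\ell y)$ and $\Theta_L(y)=\prod_{k=1}^n\vartheta_3(a_k y)$, and using the pairing relation $a_k a_{n+1-k}=\ell$ to symmetrize the denominator of the square, I would first record the identity
\[
\widetilde\Xi_L^2\!\left(\frac{y}{\sqrt\ell}\right)
=\prod_{k=1}^n\frac{\vartheta_3(\sqrt\ell\,y)\,\vartheta_3(y/\sqrt\ell)}{\vartheta_3(a_k y/\sqrt\ell)\,\vartheta_3(a_{n+1-k}y/\sqrt\ell)},
\]
valid for $y\in\mathbb R_+$. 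The purpose of the shift $y\mapsto y/\sqrt\ell$ is to center everything at the symmetry point, so that the claimed extremum of $\widetilde\Xi_L$ at $1/\sqrt\ell$ becomes an extremum of the right-hand side at $y=1$.

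Next I would rewrite each factor in the exact shape required by Theorem \ref{theta-quotients}. Using $a_{n+1-k}=\ell/a_k$, the two denominator arguments are $a_k/\sqrt\ell$ and its reciprocal $\sqrt\ell/a_k$; setting $\mu_k=\max(a_k,a_{n+1-k})/\sqrt\ell$ we get $1\leqslant\mu_k\leqslant\sqrt\ell$, and each factor becomes
\[
g_k(y)=\frac{\vartheta_3(\sqrt\ell\,y)\,\vartheta_3(y/\sqrt\ell)}{\vartheta_3(\mu_k y)\,\vartheta_3(y/\mu_k)}.
\]
Here the crucial structural feature, which distinguishes this theorem from Theorem \ref{theta-ell-min-general}, is that the outer scale $\sqrt\ell$ coming from $D^{\ell}$ dominates the inner scale $\mu_k$, so each $g_k$ is precisely of the form treated in Theorem \ref{theta-quotients} with $\kappa=\mu_k$ and $\lambda=\sqrt\ell$. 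When $\mu_k=\sqrt\ell$, equivalently $\{a_k,a_{n+1-k}\}=\{1,\ell\}$, the factor degenerates to the constant $1$; otherwise $1\leqslant\mu_k<\sqrt\ell$ and Theorem \ref{theta-quotients} applies verbatim, giving that $g_k$ has a strict global maximum at $1$, is strictly increasing on $\left]0,1\right]$, and is strictly decreasing on $\left[1,\infty\right[$.

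Finally I would assemble the product. Each $g_k$ is positive and is either the constant $1$ or unimodal with peak at $1$ in the strong monotone sense above; and a product of positive functions that are nondecreasing (respectively nonincreasing) on an interval, at least one of them strictly so, is again strictly monotone there. The hypothesis that $1<a_k<\ell$ for at least one $k$ is exactly what guarantees a nonconstant factor: for such $k$ one checks that $\mu_k\in\left[1,\sqrt\ell\right[$, since then $\max(a_k,a_{n+1-k})\in\left[\sqrt\ell,\ell\right[$. Hence $\prod_k g_k$ is strictly increasing on $\left]0,1\right]$, strictly decreasing on $\left[1,\infty\right[$, with a unique global maximum at $1$. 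Undoing the substitution transfers these properties to $\widetilde\Xi_L^2$ with the peak relocated to $1/\sqrt\ell$, and taking the positive square root preserves them for $\widetilde\Xi_L$ itself. The only genuinely delicate points are bookkeeping: verifying $1\leqslant\mu_k<\sqrt\ell$ in all relevant cases so that the hypotheses $1\leqslant\kappa<\lambda$ of Theorem \ref{theta-quotients} are actually met, and confirming that the condition on some $a_k$ indeed produces a strictly monotone factor. Everything else reduces to the elementary fact that products of monotone positive functions are monotone.
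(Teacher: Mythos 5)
Your proposal is correct and follows essentially the same route as the paper: the paper's proof consists precisely of the identity $\widetilde\Xi_L^2(y/\sqrt\ell)=\prod_{k=1}^n\vartheta_3(\sqrt\ell\,y)\,\vartheta_3(y/\sqrt\ell)\,/\,\bigl(\vartheta_3(a_ky/\sqrt\ell)\,\vartheta_3(a_{n+1-k}y/\sqrt\ell)\bigr)$ followed by an appeal to Theorem \ref{theta-quotients}. Your write-up merely makes explicit the details the paper leaves implicit (the choice $\kappa=\mu_k$, $\lambda=\sqrt\ell$, the degenerate factors, and the role of the hypothesis $1<a_k<\ell$), and these details are all verified correctly.
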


\begin{proof}
Since $\widetilde\Xi_L$ takes only positive values, we may consider its square which may be written in the form
\[\widetilde\Xi_L^2\!\left(\frac y{\sqrt\ell}\right)=
\prod_{k=1}^n\frac{\vartheta_3(y\sqrt\ell)\,\vartheta_3(y/\sqrt\ell)}{\vartheta_3(a_ky/\sqrt\ell)\,\vartheta_3(a_{n+1-k}y/\sqrt\ell)},\]
and the desired properties of $\widetilde\Xi_L$ follow from Theorem \ref{theta-quotients}.
\end{proof}

\begin{example}
For example, let us consider the lattice
\[C^\ell=\bigoplus_{d\mid n}\sqrt d\,\mathbb Z,\]
where $d\mid n$ means taking a term for each positive divisor $d$ of $\ell$, and where $\ell\in\mathbb Z_+$ with $\ell\geqslant2$. By Theorem \ref{theta-ell-min-general} this lattice is a counterexample to the original $\ell$-modular secrecy function conjecture, but by Theorem \ref{modified-theta-ell-max-general} it does satisfy the modified $\ell$-modular secrecy function conjecture.
\end{example}

\begin{figure}[h]
\begin{center}
\includegraphics[scale=.6]{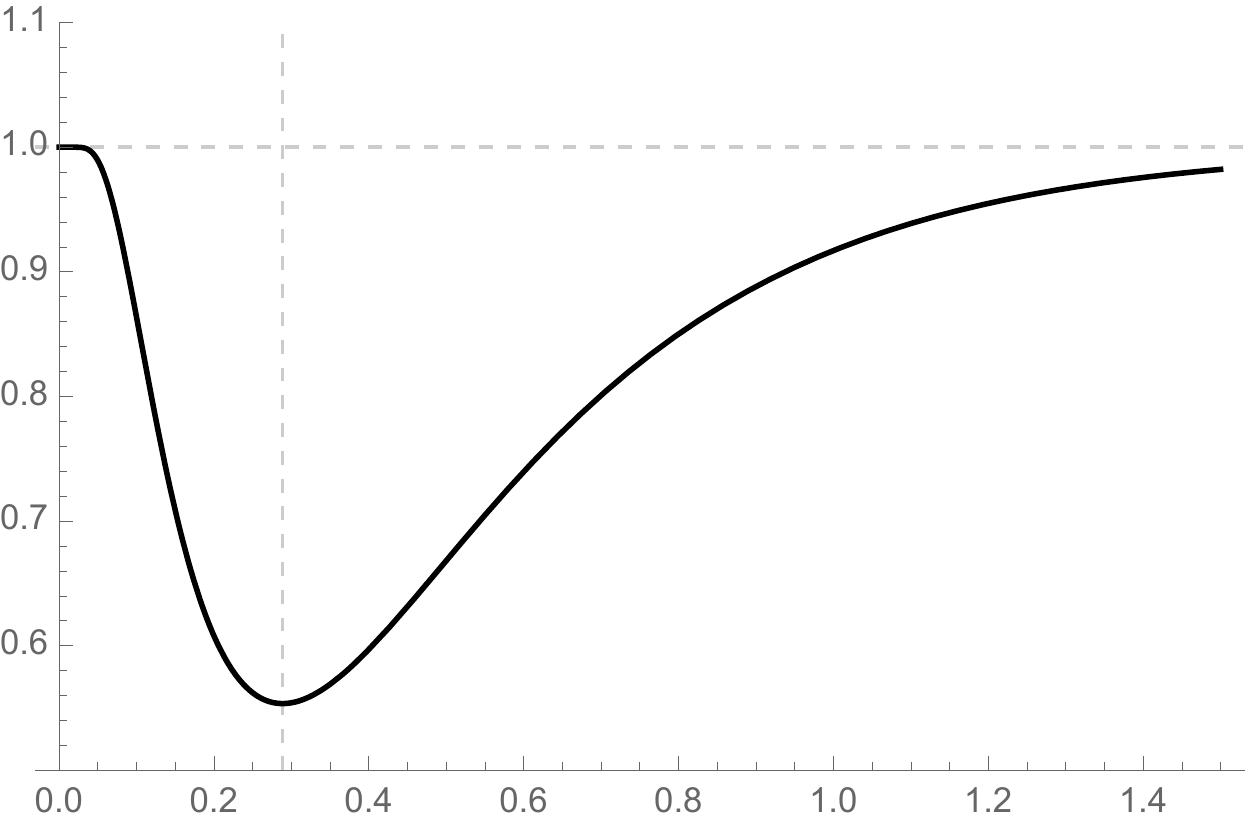}
\quad
\includegraphics[scale=.6]{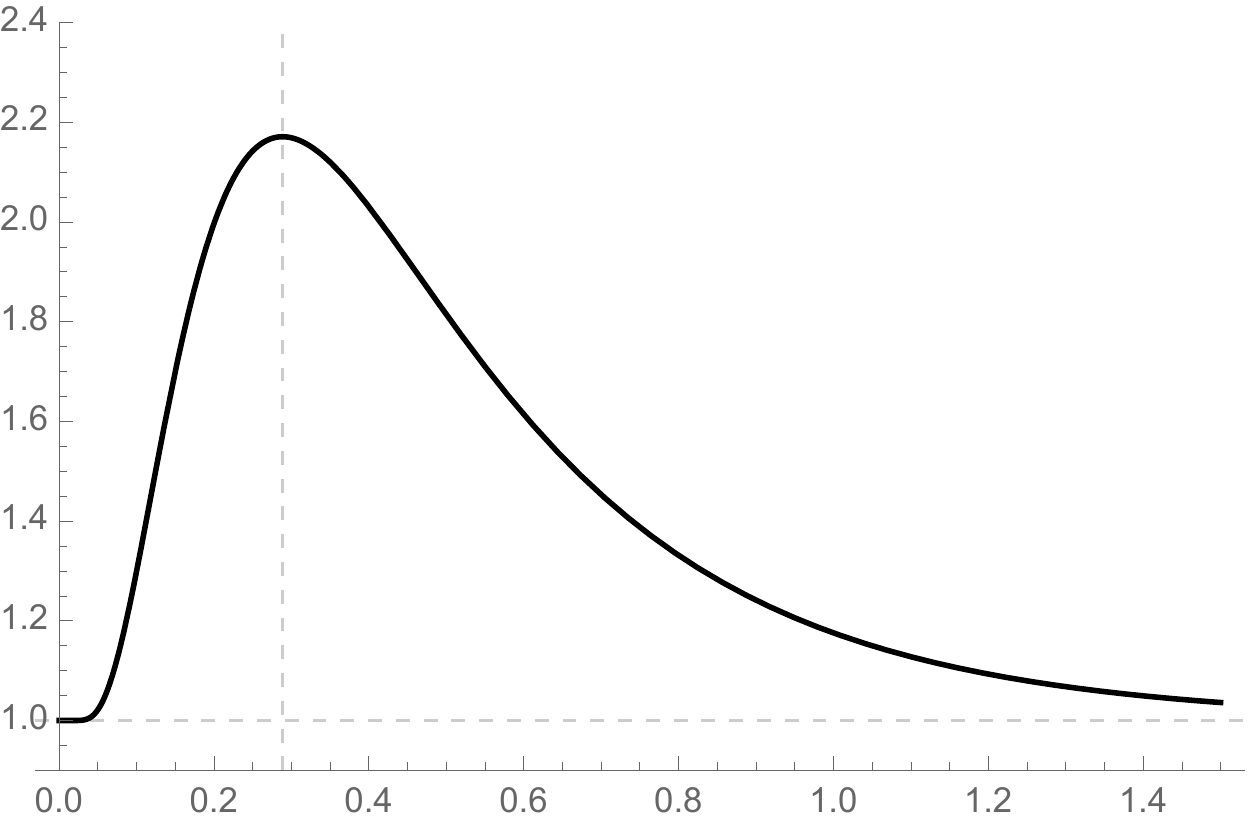}
\end{center}
\caption{\label{figure-cell} The secrecy function $\Xi_{C^{12}}$ and the modified secrecy function $\widetilde\Xi_{C^{12}}$.}
\end{figure}

\section{Polynomization}\label{polynomization}

The classical Dedekind $\eta$-function is defined by setting
\[\eta(y)=e^{-\pi y/12}\prod_{n=1}^\infty\left(1-e^{-2\pi ny}\right)\]
for $y\in\mathbb R_+$, where we have again simplified notation by writing $\eta(y)$ for $\eta(yi)$, as usual. We recall that this satisfies the modularity relation
\[\eta\!\left(\frac1y\right)=\sqrt y\,\eta(y)\]
for all $y\in\mathbb R_+$.

Let $\ell\in\mathbb Z_+$. For the polynomization we need to understand $\eta$-quotients involving the function $\eta^{(\ell)}\colon\mathbb R_+\longrightarrow\mathbb R_+$ defined by
\[\eta^{(\ell)}(y)=\prod_{d\mid\ell}\eta(dy)\]
for all $y\in\mathbb R_+$, where the product $\prod_{d\mid\ell}$ is taken over all positive divisors $d$ of $\ell$. When $\ell$ is odd, we also define
\[
g_\ell(y)=\left(\frac{\eta^{(\ell)}(y/2)\,\eta^{(\ell)}(2y)}{\left(\eta^{(\ell)}(y)\right)^2}\right)^{D_\ell/\dim C^\ell}
\]
for $y\in\mathbb R_+$,
and when $\ell$ is even, we define
\[
g_\ell(y)=\left(\frac{\eta^{(\ell/2)}(y/2)\,\eta^{(\ell/2)}(4y)}{\eta^{(\ell/2)}(y)\,\eta^{(\ell/2)}(2y)}\right)^{D_\ell/\dim C^\ell},
\]
again for $y\in\mathbb R_+$.
Here
\[D_\ell=\frac{24\,d(\ell)}{\prod_{p\mid\ell}\left(p+1\right)},\]
where $d(\ell)$ is the number of positive divisors of $\ell$ and the product $\prod_{p\mid\ell}$ is over the positive prime divisors $p$ of $\ell$. Naturally, we have
\[\dim C^\ell=\dim\bigoplus_{d\mid n}\sqrt d\,\mathbb Z=d(\ell).\]

The following theorem is the first half of Corollary 3 from \cite{rainssloane}. We recall that two lattices $\Lambda\subset\mathbb R^n$ and $\Lambda'\subset\mathbb R^n$ of dimension $n\in\mathbb Z_+$ are called rationally equivalent if $\Lambda=A\,\Lambda'$ for some invertible matrix $A\in\mathbb Q^{n\times n}$.

We also recall that an integral lattice $\Lambda$ is strongly $\ell$-modular for $\ell\in\mathbb Z_+$ if the smallest $\ell'\in\mathbb Z_+$, for which $\sqrt{\ell'}\,\Lambda^\ast$ is integral, satisfies $\ell'\mid\ell$, and if $\Lambda$ has a modularity of level $m$ for each positive divisor $m\mid\ell$ such that $m$ and $\ell/m$ are coprime. A modularity $\sigma$ of level $m\in\mathbb Z_+$ of an $n$-dimensional integral lattice $\Lambda$ is a similarity of $\mathbb R^n$ multiplying norms by $m$ and mapping $\Lambda^{\ast\Pi}$ to $\Lambda$ where $\Pi$ is the set of primes dividing $m$. Finally, given a set of primes $\Pi$ the $\Pi$-dual $\Lambda^{\ast\Pi}$ of an integral lattice $\Lambda$ consists of those vectors $v\in\Lambda\otimes\mathbb Q$ such that $v\cdot\Lambda\subseteq\mathbb Z_p$ for $p\in\Pi$ and $v\cdot\Lambda^\ast\subseteq\mathbb Z_p$ for primes $p\not\in\Pi$. In particular, if $\ell$ is a prime, then an $\ell$-modular integral lattice is also strongly $\ell$-modular.

\begin{theorem} Assume that $\ell\in \{1,2,3,5,6,7,11,14,15,23\}$,
and let $\Lambda$ be a strongly $\ell$-modular lattice that is rationally equivalent to $(C^{\ell})^k$, where $k\in\mathbb Z_+$. Then its theta series can be written in the form
\[
\Theta_\Lambda=\Theta_{C^\ell}^k\sum_{i=0}^{N}c_i\,g_2^i,
\]
where $N\in\mathbb Z_+$ and the coefficients $c_0$, $c_1$, \dots, $c_N$ are real numbers.
\end{theorem}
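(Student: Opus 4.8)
The plan is to read $\Theta_\Lambda$ as a modular form and to exploit the exceptional ring structure that singles out the ten admissible values of $\ell$. First I would record the analytic meaning of strong $\ell$-modularity: the existence of a modularity of level $m$ for every $m\mid\ell$ with $\gcd(m,\ell/m)=1$ says exactly that $\Theta_\Lambda$ is invariant, up to the automorphy factor of weight $n/2$ and the theta multiplier, under $\Gamma_0(\ell)$ together with all the Atkin--Lehner involutions $W_m$. Hence $\Theta_\Lambda$ lies in the space $M_{n/2}(G_\ell)$ of modular forms of weight $n/2$ for the group $G_\ell$ generated by $\Gamma_0(\ell)$ and these involutions. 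Since $\Lambda$ is rationally equivalent to $(C^\ell)^k$ it has rank $n=k\dim C^\ell=k\,d(\ell)$, so the weight is $n/2=k\,d(\ell)/2$; the base form $\Theta_{C^\ell}$ lies in $M_{d(\ell)/2}(G_\ell)$, and $\Theta_{C^\ell}^k$ has the same weight $k\,d(\ell)/2$ as $\Theta_\Lambda$.

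The key input is the structure theorem distinguishing precisely these ten levels: for $\ell\in\{1,2,3,5,6,7,11,14,15,23\}$ the graded ring $\bigoplus_w M_w(G_\ell)$ is a free polynomial algebra $\mathbb C[P,Q]$ on two homogeneous generators, which may be taken to be $P=\Theta_{C^\ell}$, of weight $d(\ell)/2$, and a second generator $Q$ realised as an explicit $\eta$-product. These are exactly the levels where the associated modular curve has genus zero with cusp and elliptic-point data making the valence formula force two algebraically independent generators, the lowest-weight one being the theta series $\Theta_{C^\ell}$; this is the content I would cite from Quebbemann and from Rains and Sloane. I would then expand $\Theta_\Lambda$ in this basis as a finite sum $\Theta_\Lambda=\sum_{a,b}c_{a,b}\,P^{a}Q^{b}$ over the monomials of the correct weight, $a\,\deg P+b\,\deg Q=k\,\deg P$, where every term is a genuine holomorphic form and no denominators appear.

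The remainder is bookkeeping with the grading. Dividing by $P^k=\Theta_{C^\ell}^k$ turns each monomial into $P^{a-k}Q^{b}$, and the weight relation gives $(k-a)\deg P=b\,\deg Q$. Writing $\deg P=g\,p'$ and $\deg Q=g\,q'$ with $g=\gcd(\deg P,\deg Q)$ and $\gcd(p',q')=1$, this reads $(k-a)\,p'=b\,q'$, whence $p'\mid b$ and $k-a=q'(b/p')\ge0$; consequently $P^{a-k}Q^{b}=\bigl(Q^{p'}/P^{q'}\bigr)^{b/p'}$ is a nonnegative integral power of the weight-zero function $Q^{p'}/P^{q'}$. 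Thus $\Theta_\Lambda/\Theta_{C^\ell}^k$ is automatically a polynomial in this single weight-zero generator, with no negative powers of $P$ surviving, and it remains only to identify $Q^{p'}/P^{q'}$ with the normalised $\eta$-quotient $g_\ell$ of the statement.

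The step I expect to be the real obstacle is this last identification, together with pinning down $Q$ concretely as an $\eta$-product. I would express $\Theta_{C^\ell}=\prod_{d\mid\ell}\vartheta_3(dy)$ and the generator $Q$ through $\eta$ via the classical product formulas, and then verify on $q$-expansions that $g_\ell$, carrying its exponent $D_\ell/\dim C^\ell$, is exactly the primitive weight-zero generator $Q^{p'}/P^{q'}$ rather than a proper power of it; equivalently, that $g_\ell$ has a single simple pole on the genus-zero curve, so that it is a Hauptmodul whose value at $1/\sqrt\ell$ is the right endpoint of the interval in the introduction. This calibration of the exponent $D_\ell/\dim C^\ell$ is precisely what forces the powers $b/p'$ above to be integral, hence what upgrades a rational expression to a genuine polynomial in $g_\ell$. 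The even levels $\ell\in\{6,14\}$ call for the variant definition of $g_\ell$ built from $\eta^{(\ell/2)}$, reflecting the special behaviour of the prime $2$, but the argument is otherwise identical.
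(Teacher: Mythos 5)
The first thing to say is that the paper does not prove this statement at all: it is imported verbatim as ``the first half of Corollary 3 from \cite{rainssloane}'' (note also that the $g_2$ in the displayed formula is a typo for $g_\ell$). So there is no internal proof to compare against; the only meaningful benchmark is the cited source, and your sketch is essentially a reconstruction of the Quebbemann/Rains--Sloane argument that underlies it: interpret $\Theta_\Lambda$ as a form for $\Gamma_0(\ell)$ extended by the Atkin--Lehner involutions, invoke the structure of the ring of such forms at the ten exceptional levels, and divide by $\Theta_{C^\ell}^k$ to land in a polynomial in a weight-zero $\eta$-quotient. In that sense you are taking the same route as the literature the paper leans on, and the skeleton (including the $\gcd$ bookkeeping showing that only nonnegative integral powers of the weight-zero generator survive) is sound.

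That said, as a proof the sketch leaves its two decisive inputs unestablished. First, the structure theorem is the whole content, and your stated selection principle for the ten levels is not quite right: what singles them out is the arithmetic condition $\sigma_1(\ell)\mid 24$, which makes $\prod_{d\mid\ell}\eta(d\,\cdot)^{24/\sigma_1(\ell)}$ a form of the group with integral exponents; genus zero of the relevant quotient curve holds for other levels where the two-generator conclusion fails, so the valence-formula heuristic does not by itself deliver $\mathbb C[P,Q]$. Second, rational equivalence to $(C^\ell)^k$ does more work than fixing the dimension $n=k\,d(\ell)$: the space of theta series of strongly $\ell$-modular lattices splits according to the character and the Atkin--Lehner eigenvalue data determined by the genus of the quadratic form, and the hypothesis is what places $\Theta_\Lambda$ in the same component as $\Theta_{C^\ell}^k$, i.e.\ in the polynomial ring rather than in a twisted summand. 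Finally, the calibration of $g_\ell$ --- which you honestly flag as the real obstacle --- is genuinely delicate: the forms here have expansions in $e^{-\pi y}$ rather than $e^{-2\pi y}$, and one must check (e.g.\ for $\ell=3$, where $g_3\sim e^{-\pi y}$ and $\Theta_{K_{12}}/\Theta_{C^3}^6$ is a cubic in $g_3$) that the exponent $D_\ell/\dim C^\ell$ makes $g_\ell$ the primitive weight-zero generator and not a proper power of it; your argument only yields a polynomial in the primitive generator, so if $g_\ell$ were a proper power the stated conclusion would not follow. Also, $\ell=2$ uses the even-$\ell$ definition of $g_\ell$ alongside $6$ and $14$. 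Until these three points are supplied (or cited precisely), the proposal is an accurate roadmap to the Rains--Sloane proof rather than a proof.
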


\begin{remark}
We remark here that one may use the beautiful theory of rational quadratic forms (see e.g.\ Chapter IV in \cite{Serre}) to check the rational equivalence, similarly to how the $2$-modular case was considered in \cite{e-hsethuraman}. More precisely, if we consider the lattice $(C^\ell)^k$ and a given $\ell$-modular lattice $\Lambda$ of the same dimension, where $\ell,k\in\mathbb Z_+$, then the two lattices are rationally equivalent if and only if the corresponding quadratic forms have the same discriminants, signatures and Hasse--Witt invariants. Since they have the same determinants, namely $\ell^k$, the discriminants must also be the same, and since the quadratic forms are strictly positive-definite, this easily implies that also the signatures are the same and that the Hasse--Witt invariants at $\infty$ are both equal to $+1$. The argument used in the appendix of \cite{e-hsethuraman} shows that the Hasse--Witt invariants at $p$ are both equal to $+1$ for any odd prime $p$ not dividing $\ell$. This only leaves finitely many Hasse--Witt invariants $\varepsilon_p$ to compute and to compare. More detailed discussions can be found in \cite{Serre, e-hsethuraman}.
\end{remark}

We can now state the theorems that we are going to prove.

\begin{theorem}\label{poly} Let $\ell\in \{3,5,7,11,23\}$, let $\Lambda$ be an $\ell$-modular lattice, which is rationally equivalent to $(C^{\ell})^k$, where $k\in\mathbb Z_+$, and let
\[
\Theta_{\Lambda}=\Theta_{C^\ell}^k\sum_{i=0}^{N}c_i\,g_\ell^i,
\]
where $N\in\mathbb Z_+$ and the coefficients $c_0$, $c_1$, \dots, $c_N$ are real numbers. Then the modified secrecy function $\widetilde\Xi_\Lambda$ has a unique global maximum at the natural symmetry point $1/\sqrt\ell$ if and only if
\[
\sum_{i=0}^{N}c_i\,x^i>\sum_{i=0}^Nc_i\,g_\ell^i\!\left(\frac1{\sqrt\ell}\right)
\]
for all $x\in\left]0,g_\ell(1/\smash{\sqrt\ell)}\right[$.
\end{theorem}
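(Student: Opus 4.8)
The plan is to collapse $\widetilde\Xi_\Lambda$ to the reciprocal of the polynomial $\sum_i c_i\,x^i$ evaluated at $x=g_\ell(y)$, and then to transport the extremum from the variable $y$ to the variable $x$ by a monotone change of variables. First I would simplify $\widetilde\Xi_\Lambda$. Since every $\ell\in\{3,5,7,11,23\}$ is prime, $d(\ell)=2$, so $\dim C^\ell=2$ and $C^\ell=\mathbb Z\oplus\sqrt\ell\,\mathbb Z=D^\ell$; in particular $\Lambda$ has dimension $n=2k$, whence $\Theta_{D^\ell}^{n/2}=\Theta_{C^\ell}^{k}$. Substituting the assumed representation of $\Theta_\Lambda$ gives
\[
\widetilde\Xi_\Lambda(y)=\frac{\Theta_{C^\ell}^{k}(y)}{\Theta_{C^\ell}^{k}(y)\,\sum_{i=0}^{N}c_i\,g_\ell^i(y)}=\frac1{P(y)},\qquad P(y)=\sum_{i=0}^{N}c_i\,g_\ell^i(y),
\]
and $P$ is everywhere positive because $\Theta_\Lambda$ and $\Theta_{C^\ell}$ are. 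Hence $\widetilde\Xi_\Lambda$ has a unique global maximum at $1/\sqrt\ell$ if and only if $P$ has a unique global minimum there. Primality is essential here: for composite $\ell$ one has $D^\ell\neq C^\ell$ and the factors $\Theta_{C^\ell}^{k}$ do not cancel, which is why those levels need the separate strongly $\ell$-modular variant.

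Next I would determine the shape of $g_\ell$. Writing $\alpha_\ell=D_\ell/\dim C^\ell$, $t=\log y$ and $\Phi(t)=\log\eta^{(\ell)}(e^t)=\log\eta(e^t)+\log\eta(\ell e^t)$, the definition of $g_\ell$ reads
\[
\frac1{\alpha_\ell}\log g_\ell(e^t)=\Phi(t+\log2)-2\Phi(t)+\Phi(t-\log2),
\]
a second difference of $\Phi$ with step $\log2$. The modularity relation $\eta(1/y)=\sqrt y\,\eta(y)$ yields, by a direct computation, $g_\ell(1/(\ell y))=g_\ell(y)$, so $g_\ell$ is symmetric about $y=1/\sqrt\ell$, i.e. about $t_0=-\tfrac12\log\ell$, and it suffices to treat $t>t_0$; the product formula for $\eta$ gives $g_\ell(y)\to0$ as $y\to\infty$, hence also as $y\to0^+$. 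For the strict monotonicity on $\left[1/\sqrt\ell,\infty\right[$ I would mimic the proof of Theorem \ref{theta-quotients} line for line: by Lemma \ref{convolution-identity} (with $k=0$, $h=\log2$) the displayed expression equals $\int_{-\infty}^\infty\Phi''(u)\,T(u-t;0,\log2)\,\mathrm du$, whose $t$-derivative is $\int_{-\infty}^\infty\Phi'''(u)\,T(u-t;0,\log2)\,\mathrm du$, and then I would repeat the splitting-and-reflection step of that proof, now centred at $t_0$. This requires that $\Phi'''$ be odd about $t_0$ (immediate from $\eta$-modularity) and strictly negative on $\left]t_0,\infty\right[$, the latter being the genuine input from Section \ref{technical-eta-section}.

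Granting this, $g_\ell$ is continuous, increases strictly from $0$ to $m:=g_\ell(1/\sqrt\ell)$ on $\left]0,1/\sqrt\ell\,\right]$ and decreases strictly back to $0$ on $\left[1/\sqrt\ell,\infty\right[$; its image is $\left]0,m\right]$, every value of $\left]0,m\right[$ being attained at a symmetric pair of points and $m$ only at $1/\sqrt\ell$. Setting $Q(x)=\sum_{i=0}^Nc_i\,x^i$ we have $P=Q\circ g_\ell$ and $P(1/\sqrt\ell)=Q(m)$, so $\left\{P(y):y\in\mathbb R_+\right\}=\left\{Q(x):x\in\left]0,m\right]\right\}$, with $P$ attaining the value $Q(m)$ only at $1/\sqrt\ell$. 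Therefore $P$ has a unique global minimum at $1/\sqrt\ell$ if and only if $Q(x)>Q(m)$ for all $x\in\left]0,m\right[$, which, combined with the reduction of the first paragraph, is precisely the claimed equivalence. Only the strict global maximum and positivity of $g_\ell$ are needed for the implication from the inequality to the maximum, and only continuity together with peak $m$ and infimum $0$---so that $g_\ell$ surjects onto $\left]0,m\right[$---for the converse.

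The hard part is the strict negativity of $\Phi'''$ on $\left]t_0,\infty\right[$. Writing $\Phi'''(t)=F'''(t)+F'''(t+\log\ell)$ with $F(t)=\log\eta(e^t)$, and recalling that $F'''$ is odd (by $\eta$-modularity) and, by the $\eta$-analogue of Lemma \ref{theta-min-technical-lemma} established in Section \ref{technical-eta-section}, strictly negative on $\mathbb R_+$, the inequality is trivial for $t\geqslant0$, where both summands are negative. The delicate range is $t_0<t<0$, between the centre of symmetry and the origin: there $F'''(t)>0$ while $F'''(t+\log\ell)<0$, and one must show the negative term dominates. Since $t+\log\ell>-t>0$ for $t>t_0$, this reduces to the assertion that $F'''$ is strictly decreasing on $\mathbb R_+$---a quantitative monotonicity estimate for $\log\eta(e^t)$ that is the substantive analytic content of the argument and is supplied by Section \ref{technical-eta-section}. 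The restriction to these particular $\ell$ is imposed earlier, through primality and through the availability of the Rains--Sloane $g_\ell$-representation taken here as hypothesis, rather than through this estimate, which is itself independent of $\ell$.
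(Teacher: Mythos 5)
Your proposal is correct and follows essentially the same route as the paper: reduce $1/\widetilde\Xi_\Lambda$ to the polynomial evaluated at $g_\ell$ using $C^\ell=D^\ell$ for prime $\ell$, then transport the extremum through the unimodal shape of $g_\ell$, whose monotonicity the paper delegates to Theorem~\ref{eta-quotient-theorem} and which rests on the same input (Lemma~\ref{convolution-identity} and the strictly decreasing odd third derivative of $\log\eta(\exp)$ from Theorem~\ref{subtle-properties-of-eta}). Your inline derivation of that monotonicity via reflection about $t_0$ and the sign of $\Phi'''$ is a minor variant of the paper's rearrangement estimate $\int f'''(u)\,K(u-x)\,\mathrm du<\int f'''(u)\,K(u)\,\mathrm du=0$, but it is the same argument in substance.
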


\begin{proof} Since $C^\ell=\mathbb Z\oplus\sqrt\ell\,\mathbb Z=D^\ell$ as $\ell$ is a prime, we have
\[
\frac1{\widetilde\Xi_{\Lambda}}=\frac{\Theta_{\Lambda}}{\Theta_{D^\ell}^k(y)}=\sum_{i=0}^{N}c_i\, g_\ell^i.
\]
Hence, it suffices to consider the expression $\sum_{i=0}^{N}c_i\,g_\ell^i$. Let us now analyse the behaviour of the function $g_\ell$. By Theorem \ref{eta-quotient-theorem} below the function $g_\ell$ is strictly increasing in $\left]0,1/\smash{\sqrt\ell}\right]$, and strictly decreasing in $\left[1/\smash{\sqrt\ell},\infty\right[$, with $g_\ell(y)\longrightarrow0+$ as $y\longrightarrow0+$ or $y\longrightarrow\infty$. Thus, the conjecture holds for $\Lambda$ if and only if the polynomial $\sum_{i=0}^{N}c_i\,x^i$ has in the interval $\left]0,g_\ell(1/\smash{\sqrt\ell})\right]$ a strict global minimum at the point $g_\ell(1/\sqrt\ell)$, and we are done.
\end{proof}

\begin{example}\label{K12example}
The $\vartheta$-function of the strongly $3$-modular $12$-dimensional Coxeter--Todd lattice $K_{12}$ is of the form
\[
\Theta_{K_{12}}(y)=1+756\,e^{-4\pi y}+4032\,e^{-6\pi y}+20412\,e^{-8\pi y}+60480 \,e^{-10\pi y}+\ldots
\]
It can be written in the form
\[
\left(\vartheta_3(y)\,\vartheta_3(3y)\right)^6\left(1-12\,g_3(y)+12\,g_3^2(y)-64\,g_3^3(y)\right).
\]
The modified secrecy function of this lattice is thus
\begin{alignat*}{1}\widetilde{\Xi}_{K_{12}}(y)&=\frac{\left(\vartheta_3(y)\,\vartheta_3(3y)\right)^6}{\left(\vartheta_3(y)\,\vartheta_3(3y)\right)^6\left(1-12\,g_3(y)+12\,g_3^2(y)-64\,g_3^3(y)\right)}\\&=\left(1-12\,g_3(y)+12\,g_3^2(y)-64\,g_3^3(y)\right)^{-1}\end{alignat*}
Let us now look at the polynomial $P(x)=1-12x+12x^2-64x^3$. Since $P'(x)=-12+24x-192x^2<0$ on real numbers, the polynomial $P(x)$ is decreasing. Hence, the lattice $K_{12}$ satisfies the modified conjecture.
\end{example}

\begin{example}\label{H4example}
The $\vartheta$-function of the strongly $5$-modular $8$-dimensional icosean lattice $H4$, also known as $Q_8(1)$, is of the form
\[
\Theta_{H4}(y)=1+120\,e^{-4\pi y}+240\,e^{-6\pi y}+600\,e^{-8\pi y}+1440 \,e^{-10\pi y}+\ldots
\]
It can be written in the form
\[
\left(\vartheta_3(y)\,\vartheta_3(5y)\right)^4\left(1-8\,g_5(y)+8\,g_5^2(y)-16\,g_5^3(y)\right).
\]
The modified secrecy function of this lattice is thus
\begin{alignat*}{1}\widetilde{\Xi}_{H4}(y)&=\frac{\left(\vartheta_3(y)\,\vartheta_3(5y)\right)^4}{\left(\vartheta_3(y)\,\vartheta_3(5y)\right)^4\left(1-8\,g_5(y)+8\,g_5^2(y)-16\,g_5^3(y)\right)}\\&=\left(1-8\,g_5(y)+8\,g_5^2(y)-16\,g_5^3(y)\right)^{-1}\end{alignat*}
Let us now look at the polynomial $P(x)=1-8x+8x^2-16x^3$. Since $P'(x)=-8+16x-48x^2<0$ on real numbers, the polynomial $P(x)$ is decreasing. Hence, the lattice $H4$ satisfies the modified conjecture.
\end{example}

\begin{figure}[h]
\begin{center}
\includegraphics[scale=.6]{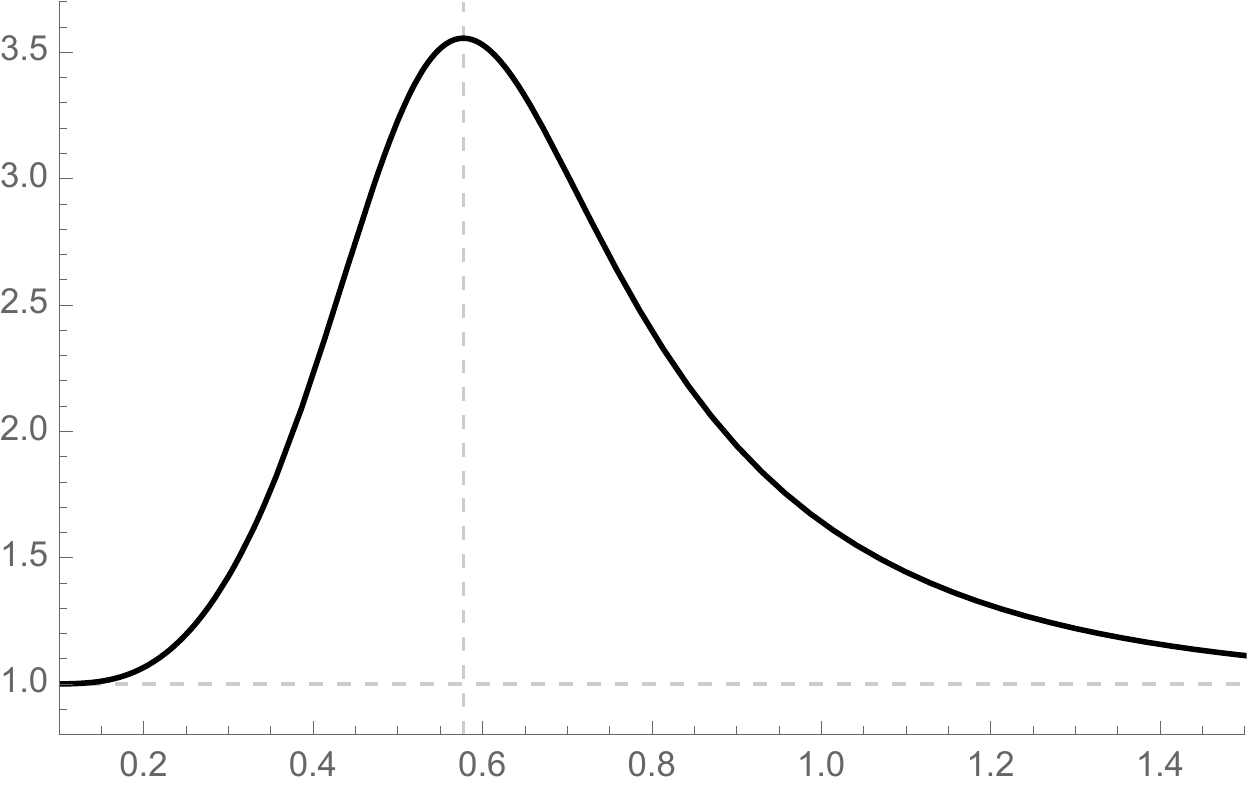}
\quad
\includegraphics[scale=.6]{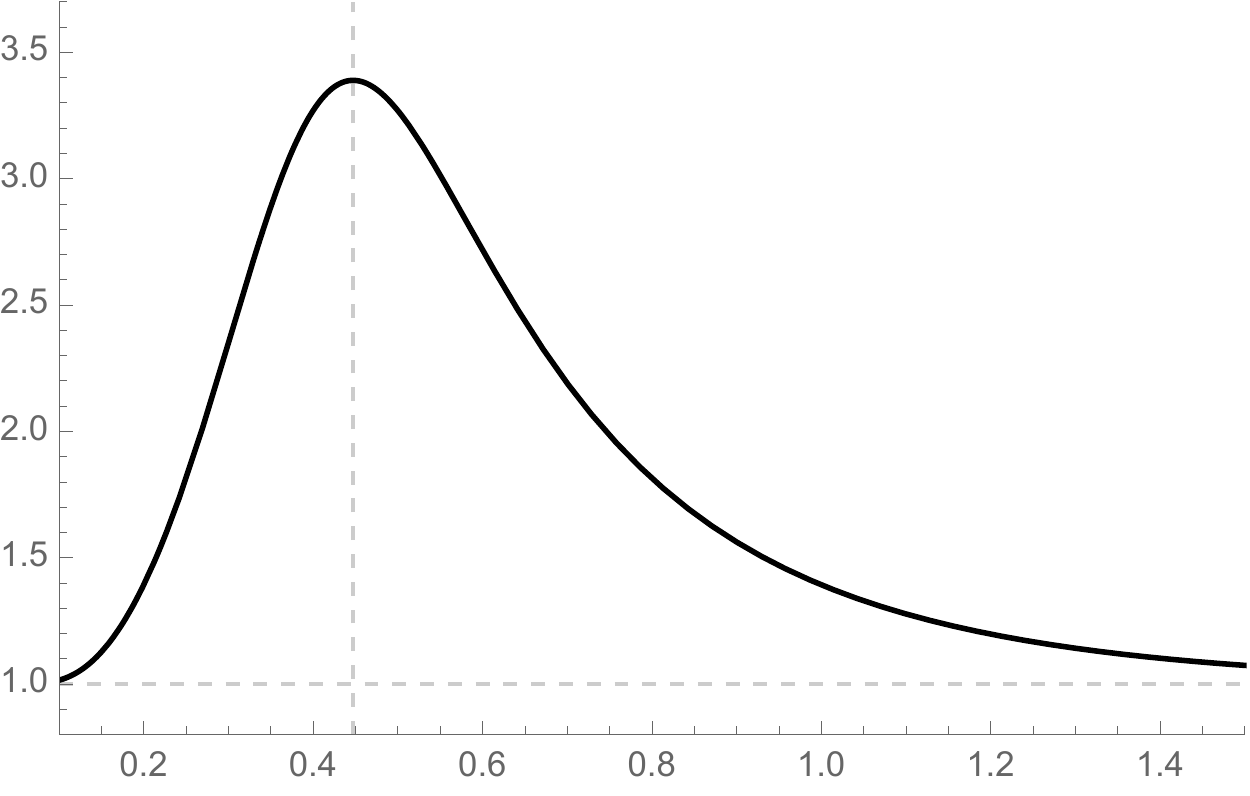}
\end{center}
\caption{\label{figure-k12h4} The modified secrecy functions $\widetilde\Xi_{K_{12}}$ and $\widetilde\Xi_{H4}$ of Examples \ref{K12example} and \ref{H4example}.}
\end{figure}

The proof of Theorem \ref{poly} also works in the case $\ell=2$ which has been treated in \cite{e-hsethuraman}.
In the case that $\ell$ is a composite number, we can prove the following theorem which is weaker than the theorem above.

\begin{theorem}\label{polycomposite} Let $\ell\in \{6,14,15\}$, let $\Lambda$ be a strongly $\ell$-modular lattice, which is rationally equivalent to $(C^{\ell})^k$, where $k\in\mathbb Z_+$, and let
\[
\Theta_{\Lambda}=\Theta_{C^\ell}^k\sum_{i=0}^{N}c_i\,g_\ell^i,
\]
where $N\in\mathbb Z_+$ and the coefficients $c_0$, $c_1$, \dots, $c_N$ are real numbers. Assume that
\[
\sum_{i=0}^{N}c_i\,x^i\geqslant\sum_{i=0}^N c_i\,g_\ell^i\!\left(\frac1{\sqrt\ell}\right)
\]
for all $x\in\left]0,g_\ell(1/\smash{\sqrt\ell})\right]$.
Then the modified secrecy function $\widetilde\Xi_\Lambda$ has a unique global maximum at the natural symmetry point $1/\sqrt\ell$.
\end{theorem}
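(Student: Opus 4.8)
The plan is to factor $1/\widetilde\Xi_\Lambda$ into two pieces, only one of which is the polynomial $P(x)=\sum_{i=0}^N c_i\,x^i$ from the hypothesis, and to show that the other piece already carries a strict minimum at $1/\sqrt\ell$. First I would record the dimension bookkeeping. For $\ell\in\{6,14,15\}$ we have $\ell=pq$ with distinct primes $p<q$, so $\dim C^\ell=d(\ell)=4$; hence $n=\dim\Lambda=4k$ and $n/2=2k$. Writing out the modified secrecy function and using the assumed form of $\Theta_\Lambda$, I obtain
\[
\frac1{\widetilde\Xi_\Lambda(y)}=\frac{\Theta_\Lambda(y)}{\Theta_{D^\ell}^{2k}(y)}=\left(\frac{\Theta_{C^\ell}(y)}{\Theta_{D^\ell}^2(y)}\right)^{\!k}\sum_{i=0}^N c_i\,g_\ell^i(y).
\]
This is exactly where the composite case departs from the prime case of Theorem \ref{poly}: there $C^\ell=D^\ell$ and the prefactor is trivial, whereas here it is not.

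The step I expect to be the crux is the identification of this extra prefactor. Since $\Theta_{C^\ell}(y)=\vartheta_3(y)\,\vartheta_3(py)\,\vartheta_3(qy)\,\vartheta_3(pqy)$ and $\Theta_{D^\ell}^2(y)=\vartheta_3^2(y)\,\vartheta_3^2(pqy)$, the ratio telescopes to
\[
h(y):=\frac{\Theta_{C^\ell}(y)}{\Theta_{D^\ell}^2(y)}=\frac{\vartheta_3(py)\,\vartheta_3(qy)}{\vartheta_3(y)\,\vartheta_3(pqy)},
\]
which is the reciprocal of a Hernandez--Sethuraman quotient with $a=p$, $b=q$, $ab=\ell$. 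Because $1<p<q$, the discussion following Theorem \ref{theta-quotients} (equivalently, Theorem \ref{theta-quotients} itself applied after the substitution $y\mapsto y/\sqrt\ell$, with $\kappa=\sqrt{q/p}$ and $\lambda=\sqrt{pq}$, so that $1\leqslant\kappa<\lambda$) shows that $h$ has a unique global minimum at $1/\sqrt\ell$. Consequently $h(y)^k\geqslant h(1/\sqrt\ell)^k>0$ for all $y\in\mathbb R_+$, with equality precisely at $y=1/\sqrt\ell$.

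With this in hand the remaining factor is governed directly by the hypothesis. By Theorem \ref{eta-quotient-theorem} the function $g_\ell$ maps $\mathbb R_+$ into $\bigl]0,g_\ell(1/\sqrt\ell)\bigr]$, attaining its maximum only at $1/\sqrt\ell$; hence the assumed bound $P(x)\geqslant P\bigl(g_\ell(1/\sqrt\ell)\bigr)$ for $x\in\bigl]0,g_\ell(1/\sqrt\ell)\bigr]$ upgrades to $P(g_\ell(y))\geqslant P(g_\ell(1/\sqrt\ell))$ for every $y\in\mathbb R_+$, and moreover $P(g_\ell(y))=\Theta_\Lambda(y)/\Theta_{C^\ell}^k(y)>0$. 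Combining the two factors, for $y\neq1/\sqrt\ell$,
\[
\frac1{\widetilde\Xi_\Lambda(y)}=h(y)^k\,P(g_\ell(y))>h\!\left(\tfrac1{\sqrt\ell}\right)^{\!k}P(g_\ell(y))\geqslant h\!\left(\tfrac1{\sqrt\ell}\right)^{\!k}P\!\left(g_\ell\!\left(\tfrac1{\sqrt\ell}\right)\right)=\frac1{\widetilde\Xi_\Lambda(1/\sqrt\ell)},
\]
where the strict inequality uses $P(g_\ell(y))>0$ together with the strict minimality of $h^k$, and the weak inequality uses positivity of $h(1/\sqrt\ell)^k$. Thus $1/\widetilde\Xi_\Lambda$ has a strict global minimum at $1/\sqrt\ell$, so $\widetilde\Xi_\Lambda$ has the claimed unique global maximum there.

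Finally I would explain why only a one-directional (sufficient) statement is available, in contrast to the equivalence in Theorem \ref{poly}. The auxiliary factor $h^k$ independently supplies a strict minimum at $1/\sqrt\ell$, so the polynomial $P$ need only satisfy the \emph{non-strict} bound and cannot be forced to attain its minimum exactly at $g_\ell(1/\sqrt\ell)$; the prefactor can mask a mild failure of $P$. Hence the hypothesis is sufficient but not necessary, which is precisely the sense in which this theorem is weaker than the prime case.
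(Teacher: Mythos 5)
Your proof is correct and follows essentially the same route as the paper: the paper factors $\widetilde\Xi_\Lambda=\bigl(\Theta_{D^\ell}^{n/2}/\Theta_{C^\ell}^k\bigr)\cdot\bigl(\Theta_{C^\ell}^k/\Theta_\Lambda\bigr)$, which is exactly your decomposition $1/\widetilde\Xi_\Lambda=h^k\,P(g_\ell)$ written upside down. The only cosmetic difference is that you re-derive the strict extremality of the prefactor at $1/\sqrt\ell$ directly from Theorem \ref{theta-quotients}, whereas the paper cites Theorem \ref{modified-theta-ell-max-general}, which packages the same application of that theorem.
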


\begin{remark}This theorem has only one direction: even if the polynomial does not attain its minimum in the interval at the given point, it might happen that the lattice satisfies the conjecture.\end{remark}

\begin{proof}[Proof of Theorem \ref{polycomposite}]
We start by factoring the modified secrecy function into two parts
\[\widetilde\Xi_\Lambda
=\frac{\Theta_{D^\ell}^{n/2}}{\Theta_\Lambda}
=\frac{\Theta_{D^\ell}^{n/2}}{\Theta_{C^\ell}^k}\cdot
\frac{\Theta_{C^\ell}^k}{\Theta_\Lambda}.\]
The first factor $\Theta_{D^\ell}^{n/2}/\Theta_{C^\ell}^k$ has a unique global maximum at the point $1/\sqrt\ell$ by Theorem \ref{modified-theta-ell-max-general}, and so it is enough to prove that the second quotient has a global maximum at the point $1/\sqrt\ell$. But this factor is
\[
\frac{\Theta_{C^{\ell}}^{k}}{\Theta_{\Lambda}}
=\left(\sum_{i=0}^{N}c_i\,g_\ell^i\right)^{-1}.
\]
Hence, it suffices to consider the expression $\sum_{i=0}^{N}c_i\,g_\ell^i$. Let us now analyse the behaviour of the function $g_\ell$. By Theorem \ref{eta-quotient-theorem} below the function $g_\ell$ is strictly increasing in $\left]0,1/\smash{\sqrt\ell}\right]$, and strictly decreasing in $\left[1/\smash{\sqrt\ell},\infty\right[$, with $g_\ell(y)\longrightarrow0+$ as $y\longrightarrow0+$ or $y\longrightarrow\infty$. Thus, the conjecture holds for $\Lambda$, provided that the polynomial $\sum_{i=0}^{N}c_i\,x^i$ has in the interval $\left]0,g_\ell(1/\smash{\sqrt\ell})\right]$ a global minimum at the point $g_\ell(1/\sqrt\ell)$, and we are done.
\end{proof}

\begin{figure}[h]
\begin{center}
\begin{tabular}{|c|c|}
\hline
$\ell$&$g_\ell(1/\sqrt\ell)\vphantom{\Big|}$\\\hline
3&0.0625000\\
5&0.0954915\\
6&0.133975\\
7&0.125000\\
11&0.176101\\
14&0.228788\\
15&0.250000\\
23&0.284920\\\hline
\end{tabular}
\end{center}
\caption{\label{g-ell-table} Numerical values for the expressions $g_\ell(1/\sqrt\ell)$ appearing in Theorems \ref{poly} and \ref{polycomposite}.}
\end{figure}

The following theorem is crucial for understanding the behaviour of $g_\ell$ in the above theorems. It depends on a technical lemma on some finer properties of the function $\log\eta(\exp)$. The lemma is stated and proved in Section \ref{technical-eta-section} below.

\begin{figure}[h]
\begin{center}
\includegraphics[scale=.6]{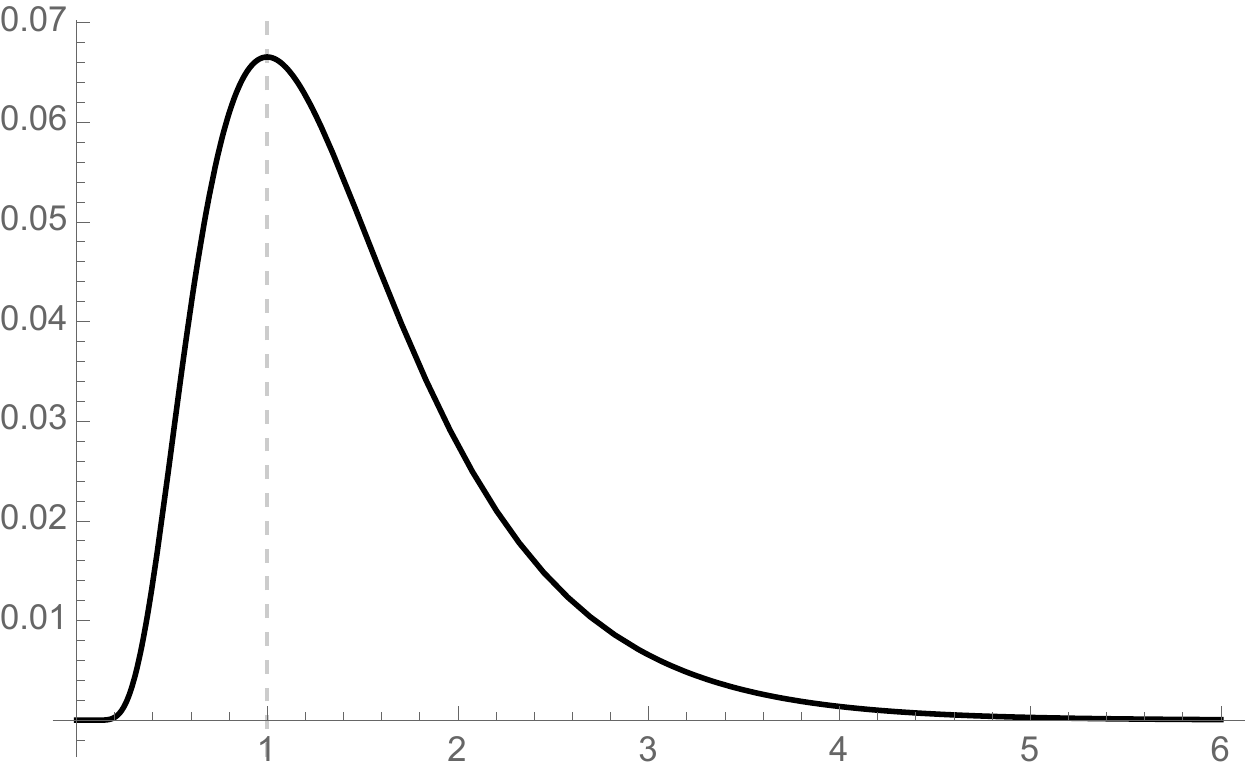}
\end{center}
\caption{\label{figure-eta-quotient} The function $g(y)$ of Theorem \ref{eta-quotient-theorem} with $\kappa=2$, $\lambda=5$ and $\ell=3$.}
\end{figure}

\begin{theorem}\label{eta-quotient-theorem}
Let $\kappa,\lambda\in\mathbb R_+$ with $1\leqslant\kappa<\lambda$, and let $\ell\in\mathbb Z_+$. Let us define a function $g\colon\mathbb R_+\longrightarrow\mathbb R_+$ by setting
\[g(y)=\frac{\eta^{(\ell)}\!\left(\displaystyle{\frac{\lambda y}{\sqrt\ell}}\right)\eta^{(\ell)}\!\left(\displaystyle{\frac y{\lambda\sqrt\ell}}\right)}{\eta^{(\ell)}\!\left(\displaystyle{\frac{\kappa y}{\sqrt\ell}}\right)\eta^{(\ell)}\!\left(\displaystyle{\frac y{\kappa\sqrt\ell}}\right)}\]
for all $y\in\mathbb R_+$.

This function $g$ is a real-analytic function satisfying the modularity relation
\[g\!\left(\frac1y\right)=g(y)\]
for all $y\in\mathbb R_+$. Furthermore, it is strictly increasing in $\left]0,1\right]$, strictly decreasing in $\left[1,\infty\right[$, has a strict global maximum at the point $1$, its image set is $\left]0,g(1)\right]$, and furthermore, $g(y)\longrightarrow0$ as $y\longrightarrow\infty$ or $y\longrightarrow0+$.
\end{theorem}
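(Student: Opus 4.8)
The plan is to mirror the proof of Theorem \ref{theta-quotients} as closely as possible, with $\log\eta^{(\ell)}(\exp)$ playing the role that $\log\vartheta_3(\exp)$ played there. First I would record a modularity relation for $\eta^{(\ell)}$: writing $d(\ell)$ for the number of divisors of $\ell$ and using $\eta(1/y)=\sqrt y\,\eta(y)$ together with the fact that $d\mapsto\ell/d$ permutes the divisors of $\ell$, one obtains
\[\eta^{(\ell)}\!\left(\frac1{\ell y}\right)=\left(\ell y^2\right)^{d(\ell)/4}\eta^{(\ell)}(y)\]
for all $y\in\mathbb R_+$. Substituting this into the four arguments of $g(1/y)$ produces a common power of $y$ in numerator and denominator which cancels, giving $g(1/y)=g(y)$. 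Real-analyticity is immediate, since each factor $\eta(dy)$ is a real-analytic and strictly positive function on $\mathbb R_+$, so $g$ is a quotient of products of such functions. Because of the symmetry $g(1/y)=g(y)$, it then suffices to prove that $g$ is strictly decreasing on $\left]1,\infty\right[$; this yields the strict increase on $\left]0,1\right]$ and the strict global maximum at $1$.

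Setting $F(x)=\log\eta^{(\ell)}(e^x/\sqrt\ell)$, $x=\log y$, $h=\log\lambda$ and $k=\log\kappa$ (so that $0\leqslant k<h$), the relation above reads $F(-x)=\tfrac{d(\ell)}2\,x+F(x)$, whence $F(-x)-F(x)$ is affine, $F''$ is even, and $F'''$ is odd. Moreover
\[\log g(y)=F(x+h)-F(x+k)-F(x-k)+F(x-h),\]
exactly the combination appearing in Lemma \ref{convolution-identity}. Applying that lemma, differentiating under the integral sign, and arguing verbatim as in the proof of Theorem \ref{theta-quotients}, the problem reduces to showing
\[\int\limits_{-\infty}^\infty F'''(u)\,T(u-x;k,h)\,\mathrm du<0\qquad(x>0),\]
and the only inputs needed for this are that $F'''$ is odd and strictly negative on $\mathbb R_+$: the splitting of the integral and the geometric inequality $T(u+x;k,h)\leqslant T(u-x;k,h)$ for $u\in\left]0,h-x\right[$ then go through unchanged.

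It therefore remains to establish that $F'''<0$ on $\mathbb R_+$, and this is the step I expect to be the main obstacle. Writing $\phi(x)=\log\eta(e^x)$ we have $F(x)=\sum_{d\mid\ell}\phi(x+s_d)$ and $F'''(x)=\sum_{d\mid\ell}\phi'''(x+s_d)$, where the shifts $s_d=\log(d/\sqrt\ell)$ form a set symmetric about $0$, since $s_{\ell/d}=-s_d$. Unlike the $\vartheta_3$ case, where a single strictly negative odd third derivative sufficed, the summands $\phi'''(x+s_d)$ with $s_d<-x$ are evaluated at negative arguments and hence have the wrong sign, so pointwise negativity of $\phi'''$ alone is not enough. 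This is precisely why the finer properties of $\log\eta(\exp)$ proved in Section \ref{technical-eta-section} are required: from the facts that $\phi'''$ is odd, strictly negative on $\mathbb R_+$, and that $-\phi'''$ is strictly increasing on $\mathbb R_+$, I would pair each $d$ with $\ell/d$ and show $\phi'''(x+s)+\phi'''(x-s)<0$ for $x>0$ and $s=s_d>0$. Indeed, for $s\leqslant x$ the term $\phi'''(x+s)$ is strictly negative and $\phi'''(x-s)\leqslant0$, while for $s>x$ oddness gives $\phi'''(x-s)=-\phi'''(s-x)$, and $x+s>s-x>0$ together with the strict monotonicity of $-\phi'''$ yields $-\phi'''(x+s)>-\phi'''(s-x)$; summing over the pairs, and adding the single strictly negative term $\phi'''(x)$ when $\ell$ is a perfect square, gives $F'''(x)<0$.

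Finally I would pin down the range of $g$. Using $\log\eta(z)=-\pi z/12+o(1)$ as $z\to\infty$, one gets, with $\sigma(\ell)=\sum_{d\mid\ell}d$,
\[\log g(y)=-\frac{\pi\,\sigma(\ell)}{12\sqrt\ell}\left(\lambda+\frac1\lambda-\kappa-\frac1\kappa\right)y+o(1)\qquad(y\to\infty),\]
and since $t\mapsto t+1/t$ is strictly increasing on $\left[1,\infty\right[$ and $1\leqslant\kappa<\lambda$, the bracketed quantity is positive, so $g(y)\to0$ as $y\to\infty$; by $g(1/y)=g(y)$ the same holds as $y\to0+$. Combining continuity, the strict monotonicity on each side of $1$, the strict global maximum $g(1)$, and the intermediate value theorem then shows that the image of $g$ is exactly $\left]0,g(1)\right]$.
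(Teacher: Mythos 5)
Your proposal is correct, and every step checks out: the modularity computation for $\eta^{(\ell)}$, the reduction to strict decrease on $\left]1,\infty\right[$, the verification that the trapezoid-splitting argument from Theorem \ref{theta-quotients} only needs $F'''$ odd and strictly negative on $\mathbb R_+$, and the pairing argument establishing that negativity are all sound. However, your route through the middle of the proof differs genuinely from the paper's. You absorb the divisor shifts $s_d=\log(d/\sqrt\ell)$ into the \emph{function}, setting $F(x)=\sum_{d\mid\ell}\phi(x+s_d)$, prove $F'''<0$ on $\mathbb R_+$ by pairing $d$ with $\ell/d$ (using that $\phi'''$ is odd and that $-\phi'''$ is strictly increasing on $\mathbb R_+$), and then rerun the geometric splitting of the integral against $T(\cdot;k,h)$ verbatim. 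The paper instead absorbs the shifts into the \emph{kernel}, writing $\frac{\mathrm d}{\mathrm dx}\log g(e^x)=\int f'''(u)\,K(u-x)\,\mathrm du$ with $K(u)=\sum_{d\mid\ell}T(u-s_d;k,h)$, observes that $K$ is even, nonnegative, compactly supported and not identically zero, and then concludes in one line from the fact that $f'''$ is strictly decreasing on all of $\mathbb R$: $\int f'''(u)\,K(u-x)\,\mathrm du<\int f'''(u)\,K(u)\,\mathrm du=0$, the last equality by oddness of $f'''$ against evenness of $K$. Both arguments consume exactly the same inputs (Lemma \ref{convolution-identity} and Theorem \ref{subtle-properties-of-eta}); the paper's kernel trick is shorter and bypasses the case analysis entirely, while your version isolates the reusable statement $F'''<0$ on $\mathbb R_+$ and makes the parallel with Theorem \ref{theta-quotients} completely literal. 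It is worth noticing that your pairing inequality $\phi'''(x+s)+\phi'''(x-s)<0$ is precisely the pointwise form of the paper's integral estimate, so the two proofs are, at bottom, the same monotonicity fact deployed at different levels.
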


\begin{proof}
The real-analyticity of $g$ follows from the real-analyticity of all the functions involved in its definition. The modularity relation is an immediate consequence of the modularity relation
\[\eta\!\left(\frac1y\right)=\sqrt y\,\eta(y),\]
which holds for all $y\in\mathbb R_+$. The conclusion about the image set will follow from the other remaining claims and the continuity of $g$. Of the limits of $g(y)$ as $y\longrightarrow0+$ or $y\longrightarrow\infty$ only one needs to be considered as the other case follows from the modularity relation. The limit $y\longrightarrow\infty$ is easily dealt with as $\eta(y)\sim e^{-\pi y/12}$ as $y\longrightarrow\infty$ and so $\eta^{(\ell)}(y)\sim e^{-\pi\sigma(\ell)y/12}$ as $y\longrightarrow\infty$, where $\sigma(N)$ denotes the sum of the positive divisors of $N$, and consequently,
\[g(y)\sim\exp\!\left(-\frac{\pi\,\sigma(\ell)}{12\,\sqrt\ell}\left(\lambda+\frac1\lambda-\kappa-\frac1\kappa\right)y\right)\]
as $y\longrightarrow\infty$, and the conclusion $g(y)\longrightarrow0$ as $y\longrightarrow\infty$ follows from the simple observation that $\lambda+1/\lambda>\kappa+1/\kappa$.

Finally, the claims about the global maximum at $1$ follows from the claimed monotonicity properties in $\left]0,1\right]$ and $\left[1,\infty\right[$, and by the modularity relation, it only remains to prove that $g(y)$ is strictly decreasing for $y\in\left]1,\infty\right[$. It is enough to prove that $\log g(e^x)$ is strictly decreasing for $x\in\mathbb R_+$. In terms of the function $f(x)$, given by $\log\eta(e^x)$ for $x\in\mathbb R$, and the parameters $h=\log\lambda$ and $k=\log\kappa$, this logarithm is, applying Lemma \ref{convolution-identity},
\begin{align*}
\log g(e^x)
&=\sum_{d\mid\ell}\left(f(x+\log d-\log\sqrt\ell+h)
-f(x+\log d-\log\sqrt\ell+k)\right.\\
&\qquad\qquad\left.-f(x+\log d-\log\sqrt\ell-k)
+f(x+\log d-\log\sqrt\ell-h)\right)\\
&=\sum_{d\mid\ell}\int\limits_{-\infty}^\infty f''(u+\log d-\log\sqrt\ell)\,T(u-x;k,h)\,\mathrm du
=\int\limits_{-\infty}^\infty f''(u)\,K(u-x)\,\mathrm du,
\end{align*}
where the kernel $K$ is given by
\[K(u)=\sum_{d\mid\ell}T(u-\log d+\log\sqrt\ell;k,h)\]
for $u\in\mathbb R$. It is easily seen that this is an even continuous compactly supported function taking only nonnegative values.

Since $f''$ is real-analytic and $K$ is continuous and compactly supported, the last integral is differentiable with derivative
\[\frac{\mathrm d}{\mathrm dx}\,\log g(e^x)=\int\limits_{-\infty}^\infty f'''(u)\,K(u-x)\,\mathrm du.\]
We only need to prove that this is strictly negative for $x\in\mathbb R_+$. But since $f'''$ is a strictly decreasing odd function in $\mathbb R$ by Theorem \ref{subtle-properties-of-eta}, and $K$ takes only nonnegative values and is even and not identically zero, we may estimate simply
\[\int\limits_{-\infty}^\infty f'''(u)\,K(u-x)\,\mathrm du
<\int\limits_{-\infty}^\infty f'''(u)\,K(u)\,\mathrm du=0,\]
and we are done.
\end{proof}

\section{Technical lemmas on $\vartheta_3$}\label{technical-theta-section}

We now state and prove the technical lemmas that are needed to prove Theorem \ref{theta-quotients}. It turns out that the following lemma has appeared as Proposition 5.14 in Faulhuber's dissertation \cite{Faulhuber}, but this section is nonetheless self-contained.

\begin{lemma}\label{theta-min-technical-lemma}
The second derivative of the function $\log\vartheta_3(e^x)$, defined for $x\in\mathbb R$, is an even function which is strictly decreasing for $x\in\left[0,\infty\right[$, and strictly increasing for $x\in\left]-\infty,0\right]$.
\end{lemma}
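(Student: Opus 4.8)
The plan is to separate the two assertions: the \emph{evenness} of $f'':=\dfrac{\mathrm d^2}{\mathrm dx^2}\log\vartheta_3(e^x)$, which is essentially formal, and the \emph{strict monotonicity}, which carries all the analytic content.

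For the evenness I would argue exactly as in the proof of Theorem~\ref{theta-quotients}: writing $f(x)=\log\vartheta_3(e^x)$, the modularity relation $\vartheta_3(1/y)=\sqrt y\,\vartheta_3(y)$ gives $f(-x)=\tfrac x2+f(x)$, so that $x\mapsto f(x)+\tfrac x4$ is even. The linear term contributes nothing to the second derivative, so $f''$ is the second derivative of an even real-analytic function and is therefore itself even; correspondingly $f'''$ is odd, whence $f'''(0)=0$. It therefore suffices to prove the single inequality
\[f'''(x)<0\qquad\text{for all }x>0,\]
since this makes $f''$ strictly decreasing on $\left[0,\infty\right[$, and evenness then forces it to be strictly increasing on $\left]-\infty,0\right]$ with its unique maximum at $0$.

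To attack $f'''<0$ I would pass to an explicit formula. Setting $w_n=\pi n^2e^x$ and introducing the probability weights $p_n=e^{-w_n}/\vartheta_3(e^x)$ on $n\in\mathbb Z$, writing $\langle\,\cdot\,\rangle$ for the corresponding average and $\mu=\langle w\rangle$, repeated differentiation (using $\tfrac{\mathrm d}{\mathrm dx}w_n=w_n$) yields
\[f''(x)=\bigl\langle(w-\mu)^2\bigr\rangle-\mu,\qquad f'''(x)=3\bigl\langle(w-\mu)^2\bigr\rangle-\mu-\bigl\langle(w-\mu)^3\bigr\rangle.\]
Thus the required inequality is equivalent to
\[\bigl\langle(w-\mu)^3\bigr\rangle+\mu>3\bigl\langle(w-\mu)^2\bigr\rangle\qquad(x>0),\]
and at $x=0$ the two sides are \emph{equal}, consistently with $f'''(0)=0$. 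Equivalently, after clearing the positive denominator $\vartheta_3^3$, one must show that the numerator $\vartheta_3^2\,\vartheta_3'''-3\,\vartheta_3\,\vartheta_3'\,\vartheta_3''+2\,(\vartheta_3')^3$ (derivatives in $x$) is strictly negative for $x>0$; expanding each factor through the series for $\vartheta_3$ turns this numerator into an absolutely convergent symmetric triple sum $\sum_{(a,b,c)\in\mathbb Z^3}Q(w_a,w_b,w_c)\,e^{-w_a-w_b-w_c}$ with an explicit symmetric polynomial $Q$.

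The main obstacle is precisely the sign of this sum for intermediate $x$. The two ends are easy: as $x\to\infty$ the measure $p_n$ concentrates at $n=0$, the mass at $n=\pm1$ is $\approx2e^{-\pi e^x}$, and the two-term approximation reduces the inequality to $W^2-3W+1>0$ with $W=\pi e^x$, which holds on the whole range $W\geqslant\pi>\tfrac{3+\sqrt5}2$; and at $x=0$ one has equality. The difficulty is to bridge these and to rule out any sign change of $f'''$ on $\left]0,\infty\right[$ caused by the cancellations inherent in the symmetric sum. My strategy would be to isolate the dominant contribution, namely the terms in which one index outweighs the others, and to bound the remaining terms by the corresponding geometric tails; this estimate is exactly what the remaining technical lemmas of this section are meant to supply. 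I would also keep in mind that the weaker convexity statement $f''>0$ of Faulhuber and Steinerberger \cite{Faulhuber--Steinerberger} can be fed in to control some of the intermediate quantities, and that the appearance of the same result as Proposition~5.14 of \cite{Faulhuber} suggests the needed estimates are delicate but ultimately elementary.
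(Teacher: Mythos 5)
Your reduction is sound as far as it goes: the evenness of $f''$ via the modularity relation, the reduction to $f'''(x)<0$ for $x>0$, and the cumulant identity $f'''=3\langle(w-\mu)^2\rangle-\langle(w-\mu)^3\rangle-\mu$ are all correct (I checked the moment computation), and your two-term asymptotic at $x\to\infty$ correctly reduces to $W^2-3W+1>0$ for $W=\pi e^x\geqslant\pi$. This matches the skeleton of the paper's argument, which likewise reduces everything to the sign of (the numerator of) $f'''$, expressed there in terms of $y$-derivatives of $\vartheta_3$.

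The genuine gap is your treatment of the neighbourhood of $x=0$, which is exactly where the analytic difficulty is concentrated and which your plan cannot handle as described. Since $f'''$ is odd, $f'''(0)=0$, so the target inequality $f'''(x)<0$ degenerates as $x\to0+$; any estimate of the form ``dominant contribution plus a geometric-tail error'' produces a bound $f'''(x)<A(x)+E$ with $A(x)\to0$ and $E$ a fixed positive error, which says nothing near $0$. You notice the equality at $x=0$ and the danger of cancellation, but then defer the resolution to ``the remaining technical lemmas of this section,'' which is precisely the content you were asked to supply. The paper escapes the degeneracy by changing the target near the origin: on $y\in\left[1,3/2\right]$ it proves the \emph{non-degenerate} inequality $f''''(x)<-0.16$ (Lemma \ref{technical-lemma-on-fourth-derivative}, verified by explicit truncation bounds $\vartheta_{3,\nu}\leqslant(-1)^\nu\vartheta_3^{(\nu)}\leqslant\Theta_{3,\nu}$ on $500$ short subintervals), and then uses evenness of $f''$ together with $f'''(0)=0$ to conclude $f'''<0$ there; only for $y\geqslant3/2$, where $f'''$ is bounded away from zero, does it estimate $f'''$ directly (Lemma \ref{technical-lemma-on-third-derivative}). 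Your proposal is missing this switch to the fourth derivative (or some equivalent device capturing the first-order vanishing of $f'''$ at $0$), so as written it does not close. The appeal to the Faulhuber--Steinerberger convexity $f''>0$ does not help here, since the issue is the sign of $f'''$, not of $f''$.
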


\begin{proof}
Let us simplify notation by writing $f(x)=\log\vartheta_3(e^x)$ for $x\in\mathbb R$.
By the modularity relation of $\vartheta_3$ we have
\[\vartheta_3\!\left(\frac1y\right)=\sqrt y\,\vartheta_3(y),\]
for $y\in\mathbb R_+$,
so that
\[\log\vartheta_3(e^{-x})=\frac x2+\log\vartheta_3(e^x),\]
for $x\in\mathbb R$, so that the second derivative $f''$ is an even function of $x$, and it is enough to prove that it is strictly decreasing for $x\in\left[0,\infty\right[$.

The second derivative is
\[f''(x)
=\frac{e^x\,\vartheta_3'(e^x)\,\vartheta_3(e^x)+e^{2x}\,\vartheta_3''(e^x)\,\vartheta_3(e^x)-e^{2x}\left(\vartheta_3'(e^x)\right)^2}{\vartheta_3^2(e^x)}.\]

We will prove the claim first for $x\in\left[0,\log(3/2)\right]$. Since $f''(-x)=f''(x)$ for all $x\in\mathbb R$, it is enough to prove that the function $f''$ is strictly concave in $\left[0,\log(3/2)\right]$, or equivalently, that the fourth derivative $f''''(x)$ is strictly negative for $x\in\left[0,\log(3/2)\right]$, but this is established in Lemma \ref{technical-lemma-on-fourth-derivative} below.

Thus it remains to prove that the function
\[\frac{y\,\vartheta_3'(y)\,\vartheta_3(y)+y^2\,\vartheta_3''(y)\,\vartheta_3(y)-y^2\left(\vartheta_3'(y)\right)^2}{\vartheta_3^2(y)}\]
is strictly decreasing for $y\in\left[3/2,\infty\right[$. To do so, we will prove that its derivative, given by the expression
\begin{multline*}
\frac{\vartheta_3'(y)\,\vartheta_3^2(y)+3\,y\,\vartheta_3''(y)\,\vartheta_3^2(y)-3\,y\left(\vartheta_3'(y)\right)^2\vartheta_3(y)}{\vartheta_3^3(y)}\\
+\frac{y^2\,\vartheta_3'''(y)\,\vartheta_3^2(y)-3\,y^2\,\vartheta_3''(y)\,\vartheta_3'(y)\,\vartheta_3(y)+2\,y^2\left(\vartheta_3'(y)\right)^3}{\vartheta_3^3(y)},
\end{multline*}
takes only strictly negative values for $y\in\left[3/2,\infty\right[$. Since the denominator takes only strictly positive values, we may focus solely on the numerator. But the strict negativity of the numerator for $y\in\left[3/2,\infty\right[$ is shown in Lemma \ref{technical-lemma-on-third-derivative} below.
\end{proof}

\begin{lemma}\label{technical-lemma-on-fourth-derivative}
Let $f(x)=\log\vartheta_3(e^x)$ for $x\in\mathbb R$. Then
\[f''''(x)<0\]
for $x\in\left[0,\log(3/2)\right]$.
\end{lemma}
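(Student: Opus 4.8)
The plan is to establish $f''''(x)<0$ on $[0,\log(3/2)]$ by passing to the variable $y=e^x\in[1,3/2]$ and showing that a certain explicit combination of $\vartheta_3$ and its $y$-derivatives is strictly negative there. Since $f(x)=\log\vartheta_3(e^x)$, repeated application of the chain rule expresses $f''''(x)$ as a rational function whose numerator is a polynomial in $y,\vartheta_3(y),\vartheta_3'(y),\vartheta_3''(y),\vartheta_3'''(y),\vartheta_3''''(y)$ and whose denominator is a positive power of $\vartheta_3(y)$. Because the denominator is strictly positive, it suffices to prove strict negativity of the numerator. First I would carry out this differentiation carefully to obtain the numerator explicitly; this is routine but must be done accurately, keeping track of the powers of $y$ introduced by each application of the chain rule.

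The key difficulty is that $\vartheta_3$ and its derivatives are transcendental, so the sign of the numerator cannot be read off algebraically. The approach I would take is to obtain rigorous two-sided bounds for each of $\vartheta_3^{(j)}(y)$ on the compact interval $y\in[1,3/2]$ from the rapidly convergent series
\[\vartheta_3(y)=1+2\sum_{n=1}^\infty e^{-\pi n^2 y},\]
whose term-by-term derivatives give
\[\vartheta_3^{(j)}(y)=2\sum_{n=1}^\infty(-\pi n^2)^j\,e^{-\pi n^2 y}.\]
On $[1,3/2]$ the terms decay like $e^{-\pi n^2}$, so truncating after a few terms and bounding the tail geometrically yields intervals enclosing each $\vartheta_3^{(j)}(y)$ that are tight enough to control the numerator. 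I would then use interval arithmetic (or explicit monotone upper and lower envelopes valid throughout $[1,3/2]$) to bound the numerator away from zero from above.

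The main obstacle I expect is that the numerator is a sum of several terms of competing signs, so naive interval bounds on each monomial may be too lossy to conclude strict negativity over the whole interval at once. To handle this I would either subdivide $[1,3/2]$ into finitely many subintervals on each of which the enclosures are sharp enough, or algebraically regroup the numerator to cancel the dominant cancellations before bounding. A useful structural simplification is that the leading behaviour as $y\to\infty$ is governed by the single dominant exponential $e^{-\pi y}$, and the contribution one wants to isolate is the sign of the next-order correction; writing $\vartheta_3(y)=1+2e^{-\pi y}+2e^{-4\pi y}+\cdots$ and expanding the numerator in powers of the small quantities $e^{-\pi y},e^{-4\pi y},\ldots$ makes the negativity transparent once enough terms are retained. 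The verification that the retained terms dominate the discarded tail uniformly on $[1,3/2]$ is the genuinely delicate estimate, and I would reserve the bulk of the technical effort for it.
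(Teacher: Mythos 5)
Your proposal is correct and follows essentially the same route as the paper: reduce to the sign of an explicit combination of $y^j\vartheta_3^{(j)}(y)$ for $y\in[1,3/2]$, enclose each derivative by truncating the theta series and absorbing the tail, and then verify strict negativity by monotone envelopes on a fine subdivision of the interval (the paper uses $500$ subintervals of length $1/1000$ and the bound $h(y)<-0.16$). The only cosmetic difference is that the paper works with $g=\log\vartheta_3$ and the identity $f''''(x)=y^4g''''(y)+6y^3g'''(y)+7y^2g''(y)+yg'(y)$ rather than with a single rational numerator, but this is the same computation.
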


\begin{proof}
Let us write $g(y)=\log\vartheta_3(y)$ for $y\in\mathbb R_+$. The first derivative of this expression is
\[\frac{\mathrm d}{\mathrm dy}\,g(y)=\frac{\vartheta_3'(y)}{\vartheta_3(y)},\]
its second derivative is
\[\frac{\mathrm d^2}{\mathrm dy^2}\,g(y)=\frac{\vartheta_3''(y)}{\vartheta_3(y)}-\frac{\left(\vartheta_3'(y)\right)^2}{\vartheta_3^2(y)},\]
its third derivative is
\[\frac{\mathrm d^3}{\mathrm dy^3}\,g(y)=\frac{\vartheta_3'''(y)}{\vartheta_3(y)}-\frac{3\,\vartheta_3''(y)\,\vartheta_3'(y)}{\vartheta_3^2(y)}+\frac{2\left(\vartheta_3'(y)\right)^3}{\vartheta_3^3(y)},\]
and its fourth derivative is
\[\frac{\mathrm d^4}{\mathrm dy^4}\,g(y)=\frac{\vartheta_3''''(y)}{\vartheta_3(y)}-\frac{4\,\vartheta_3'''(y)\,\vartheta_3'(y)}{\vartheta_3^2(y)}-\frac{3\left(\vartheta_3''(y)\right)^2}{\vartheta_3^2(y)}+\frac{12\,\vartheta_3''(y)\left(\vartheta_3'(y)\right)^2}{\vartheta_3^3(y)}-\frac{6\left(\vartheta_3'(y)\right)^4}{\vartheta_3^4(y)},\]
and the fourth derivative $f''''$ is
\[\frac{\mathrm d^4}{\mathrm dx^4}\,f(x)=
\frac{\mathrm d^4}{\mathrm dx^4}\,g(e^x)
=g''''(e^x)\,e^{4x}+6\,g'''(e^x)\,e^{3x}+7\,g''(e^x)\,e^{2x}+g'(e^x)\,e^x.\]
Thus, we need to prove that the expression
\[h(y)=y^4\,g''''(y)+6\,y^3\,g'''(y)+7\,y^2\,g''(y)+y\,g'(y)\]
is strictly negative for $y\in\left[1,3/2\right]$.

Let $m,M\in\left[1,\infty\right[$ with $m<M$. Then, for $y\in\left[m,M\right]$, we may use Lemma \ref{theta3-asymptotics-with-notation} to rewrite and estimate the expression as follows:
\begin{align*}
h(y)&=y^4\left(\frac{\vartheta_3''''(y)}{\vartheta_3(y)}-\frac{4\,\vartheta_3'''(y)\,\vartheta_3'(y)}{\vartheta_3^2(y)}-\frac{3\left(\vartheta_3''(y)\right)^2}{\vartheta_3^2(y)}+\frac{12\,\vartheta_3''(y)\left(\vartheta_3'(y)\right)^2}{\vartheta_3^3(y)}-\frac{6\left(\vartheta_3'(y)\right)^4}{\vartheta_3^4(y)}
\right)\\
&\qquad+6\,y^3\left(\frac{\vartheta_3'''(y)}{\vartheta_3(y)}-\frac{3\,\vartheta_3''(y)\,\vartheta_3'(y)}{\vartheta_3^2(y)}+\frac{2\left(\vartheta_3'(y)\right)^3}{\vartheta_3^3(y)}
\right)\\
&\qquad+7\,y^2\left(\frac{\vartheta_3''(y)}{\vartheta_3(y)}-\frac{\left(\vartheta_3'(y)\right)^2}{\vartheta_3^2(y)}\right)
+y\,\frac{\vartheta_3'(y)}{\vartheta_3(y)}\\
&<
\left(\frac{M^4\,\Theta_{3,4}(m)}{\vartheta_{3,0}(M)}
-\frac{4\,m^4\,\vartheta_{3,3}(M)\,\vartheta_{3,1}(M)}{\Theta_{3,0}^2(m)}
-\frac{3\,m^4\left(\vartheta_{3,2}(M)\right)^2}{\Theta_{3,0}^2(m)}\right.\\
&\qquad\qquad\left.
+\frac{12\,M^4\,\Theta_{3,2}(m)\left(\Theta_{3,1}(m)\right)^2}{\vartheta_{3,0}^3(M)}
-\frac{6\,m^4\left(\vartheta_{3,1}(M)\right)^4}{\Theta_{3,0}^4(m)}
\right)\\
&\qquad+6\left(-\frac{m^3\,\vartheta_{3,3}(M)}{\Theta_{3,0}(m)}+\frac{3\,M^3\,\Theta_{3,2}(m)\,\Theta_{3,1}(m)}{\vartheta_{3,0}^2(M)}-\frac{2\,m^3\left(\vartheta_{3,1}(M)\right)^3}{\Theta_{3,0}^3(m)}
\right)\\
&\qquad+7\left(\frac{M^2\,\Theta_{3,2}(m)}{\vartheta_{3,0}(M)}-\frac{m^2\,\left(\vartheta_{3,1}(M)\right)^2}{\Theta_{3,0}^2(m)}\right)
-\frac{m\,\vartheta_{3,1}(M)}{\Theta_{3,0}(m)}.
\end{align*}
Using this upper bound, it is easy to check numerically, that $h(y)<-0.{16}$ for all
\[y\in\left[1+\frac{k-1}{1000},1+\frac k{1000}\right]\]
for each $k\in\left\{1,2,\ldots,500\right\}$ separately.
\end{proof}

\begin{lemma}\label{technical-lemma-on-third-derivative}
Let $y\in\left[3/2,\infty\right[$. Then
\begin{multline*}
\vartheta_3'(y)\,\vartheta_3^2(y)+3\,y\,\vartheta_3''(y)\,\vartheta_3^2(y)-3\,y\left(\vartheta_3'(y)\right)^2\vartheta_3(y)\\
+y^2\,\vartheta_3'''(y)\,\vartheta_3^2(y)-3\,y^2\,\vartheta_3''(y)\,\vartheta_3'(y)\,\vartheta_3(y)+2\,y^2\left(\vartheta_3'(y)\right)^3<0.
\end{multline*}
\end{lemma}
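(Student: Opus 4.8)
The plan is to express the six-term numerator $N(y)$ entirely through $\vartheta_3$ and its derivatives, exploit the $q$-series $\vartheta_3^{(j)}(y)=2\sum_{n\geqslant1}(-\pi n^2)^j e^{-\pi n^2y}$, and prove negativity by isolating one dominant negative block together with a single small positive remainder. Abbreviating $a_j=\vartheta_3^{(j)}(y)$ and $q=e^{-\pi y}$, the expression in the statement is
\[N=a_0^2\bigl(a_1+3y\,a_2+y^2a_3\bigr)-3y\,a_0a_1^2-3y^2a_0a_1a_2+2y^2a_1^3.\]
For $y\in\left[3/2,\infty\right[$ we have $a_0=\vartheta_3>1$, $a_1<0$, $a_2>0$ and $a_3<0$, so among the four blocks the first, the $-3y\,a_0a_1^2$ term, and the $2y^2a_1^3$ term are negative, and the only positive contribution is $-3y^2a_0a_1a_2$.

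First I would control the block $L(y):=a_1+3y\,a_2+y^2a_3$. Reading it off from the $q$-series gives
\[L=-2\pi\sum_{n=1}^\infty n^2q^{n^2}\bigl(\pi^2n^4y^2-3\pi n^2y+1\bigr),\]
and the quadratic $t\mapsto t^2-3t+1$ (with $t=\pi n^2y$) is positive once $t>(3+\sqrt5)/2\approx2.618$. Since $\pi n^2y\geqslant 3\pi/2>2.618$ for every $n\geqslant1$ and every $y\geqslant3/2$, each summand of $L$ is negative, so $L<0$ on the whole interval; keeping only the $n=1$ term yields the minorant $|L|\geqslant 2\pi q\,(\pi^2y^2-3\pi y+1)$.

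The heart of the argument is to show the lone positive term is dominated by $a_0^2L$. Because the other two blocks are negative, it suffices to prove $3y^2a_0\,|a_1|\,a_2<a_0^2\,|L|$, and dividing by $a_0>1$ it is enough to show $3y^2\,|a_1|\,a_2<|L|$. Writing $|a_1|=2\pi q\,(1+\delta_1)$ and $a_2=2\pi^2q\,(1+\delta_2)$, where $\delta_1,\delta_2$ are the tiny tails $\sum_{n\geqslant2}n^{2j}q^{n^2-1}$ bounded via Lemma \ref{theta3-asymptotics-with-notation}, this reduces to the elementary inequality
\[6\pi^2y^2\,e^{-\pi y}\,(1+\delta_1)(1+\delta_2)<\pi^2y^2-3\pi y+1.\]
On $\left[3/2,\infty\right[$ the left-hand side is decreasing (the factor $e^{-\pi y}$ beats $y^2$ once $y>2/\pi$, and the tails only shrink) while the right-hand side is increasing, so it suffices to verify the single endpoint $y=3/2$, where the left side is below $1.2$ and the right side exceeds $9$. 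Combining the three negative blocks with this domination gives $N<0$ throughout.

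The main obstacle — essentially the only step beyond bookkeeping — is to make the tail corrections $\delta_1,\delta_2$ and the polynomial-versus-exponential comparison rigorous \emph{uniformly} over the unbounded interval, rather than on a finite subdivision as in Lemma \ref{technical-lemma-on-fourth-derivative}. This is exactly what the explicit majorants and minorants of Lemma \ref{theta3-asymptotics-with-notation} provide, and the monotonicity of both sides on $\left[3/2,\infty\right[$ reduces the whole infinite range to one endpoint check with a very comfortable margin.
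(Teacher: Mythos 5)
Your argument is correct, and it takes a genuinely different route from the paper's. The paper proves the lemma by brute force: it substitutes the two-sided truncations $\vartheta_{3,\nu}\leqslant(-1)^\nu\vartheta_3^{(\nu)}\leqslant\Theta_{3,\nu}$ of Lemma \ref{theta3-asymptotics-with-notation} into each of the six monomials (choosing the upper or lower bound according to sign), absorbs the $e^{-9\pi y}$ tails into the $e^{-4\pi y}$ terms via $e^{-\pi y}<1/100$, expands the whole product into a polynomial in $e^{-\pi y}$ whose coefficients are quadratics in $y$, discards the manifestly negative groups, and bounds what remains by $-20\,y^2e^{-\pi y}$. You instead exploit the structure of the expression: grouping it as $a_0^2(a_1+3ya_2+y^2a_3)-3ya_0a_1^2-3y^2a_0a_1a_2+2y^2a_1^3$ with $a_j=\vartheta_3^{(j)}(y)$, you observe that the sign pattern $a_0>1$, $a_1<0$, $a_2>0$, $a_3<0$ makes three of the four blocks negative, that the block $L=a_1+3ya_2+y^2a_3$ is itself negative term by term because $t^2-3t+1>0$ for $t=\pi n^2y\geqslant 3\pi/2$, and that the single positive block $-3y^2a_0a_1a_2$ is dominated by $a_0^2|L|$ via the reduction to $6\pi^2y^2e^{-\pi y}(1+\delta_1)(1+\delta_2)<\pi^2y^2-3\pi y+1$, which follows from one endpoint evaluation at $y=3/2$ (roughly $1.2<9$) plus monotonicity of both sides. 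All the individual steps check out — in particular the chain $3y^2a_0|a_1|a_2<a_0|L|\leqslant a_0^2|L|$ uses $a_0>1$ correctly, and $y^2e^{-\pi y}$ is indeed decreasing past $2/\pi$. Your approach buys a conceptually cleaner, genuinely uniform argument on the whole of $\left[3/2,\infty\right[$ with a very comfortable margin and no large symbolic expansion; the paper's approach is more mechanical but reuses exactly the same bounding machinery (Lemma \ref{theta3-asymptotics-with-notation}) already set up for the companion Lemma \ref{technical-lemma-on-fourth-derivative}. The only housekeeping left in your write-up is to record an explicit numerical bound on the tails $\delta_1,\delta_2$ (e.g.\ $\delta_j\leqslant\sum_{n\geqslant2}n^{2j}e^{-\pi(n^2-1)\cdot 3/2}$, which is far below $10^{-4}$), but this is routine.
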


\begin{proof}
Using Lemma \ref{theta3-asymptotics-with-notation} below, the left-hand side is
\begin{multline*}
<-\vartheta_{3,1}(y)\,\vartheta_{3,0}^2(y)+3\,y\,\Theta_{3,2}(y)\,\Theta_{3,0}^2(y)-3\,y\left(\vartheta_{3,1}(y)\right)^2\vartheta_{3,0}(y)\\
-y^2\,\vartheta_{3,3}(y)\,\vartheta_{3,0}^2(y)+3\,y^2\,\Theta_{3,2}(y)\,\Theta_{3,1}(y)\,\Theta_{3,0}(y)-2\,y^2\left(\vartheta_{3,1}(y)\right)^3.
\end{multline*}
We may further absorb the terms involving $e^{-9\pi y}$ into those involving $e^{-4\pi y}$, remembering that $e^{-\pi y}<1/100$ for $y\geqslant3/2$, leading to the upper bound
\begin{align*}
<&-\left(2\pi\,e^{-\pi y}+8\pi\,e^{-4\pi y}\right)\left(1+2\,e^{-\pi y}+2\,e^{-4\pi y}\right)^2\\
&+3\,y\left(2\,\pi^2\,e^{-\pi y}+33\,\pi^2\,e^{-4\pi y}\right)\left(1+2\,e^{-\pi y}+3\,e^{-4\pi y}\right)^2\\
&-3\,y\left(2\pi\,e^{-\pi y}+8\pi\,e^{-4\pi y}\right)^2\left(1+2\,e^{-\pi y}+2\,e^{-4\pi y}\right)\\
&-y^2\left(2\,\pi^3\,e^{-\pi y}+128\,\pi^3\,e^{-4\pi y}\right)\left(1+2\,e^{-\pi y}+2\,e^{-4\pi y}\right)^2\\
&+3\,y^2\left(2\,\pi^2\,e^{-\pi y}+33\,\pi^2\,e^{-4\pi y}\right)\left(2\pi\,e^{-\pi y}+9\pi\,e^{-4\pi y}\right)\left(1+2\,e^{-\pi y}+3\,e^{-4\pi y}\right)\\
&-2\,y^2\left(2\pi\,e^{-\pi y}+8\pi\,e^{-4\pi y}\right)^3.
\end{align*}
The last expression turns out to be
\begin{align*}
=&-\left( 2 \,\pi^3 \,y^2 -6 \,\pi^2 \,y +2 \pi\right)e^{-\pi y} 
+\left( 4 \,\pi^3\, y^2+ 12\, \pi^2 \,y-8 \pi \right)e^{-2 \pi y}-8 \pi\,e^{-3 \pi y}\\
&- \left(128\, \pi^3 \,y^2 - 99 \,\pi^2\, y +8 \pi \right)e^{-4 \pi y}
-\left( 268\, \pi^3 \,y^2 - 336 \,\pi^2\, y +40 \,\pi\right)e^{-5 \pi y} \\
&- \left( 180\, \pi^3\, y^2 - 252\, \pi^2\, y +48 \,\pi\right)e^{-6 \pi y}
+ \left( 379 \,\pi^3 \,y^2 +402\, \pi^2\, y-32 \,\pi \right)e^{-8 \pi y}\\
&+\left( 738 \,\pi^3 \,y^2 + 666\, \pi^2 \,y -72\, \pi\right)e^{-9 \pi y}
+\left( 1137 \,\pi^3\, y^2 + 507 \,\pi^2\, y-32 \,\pi \right)e^{-12 \pi y}.
\end{align*}
Here the contribution from the terms involving $e^{-3\pi y}$, $e^{-4\pi y}$, $e^{-5\pi y}$ and $e^{-6\pi y}$ are clearly strictly negative for $y\in\left[3/2,\infty\right[$. Similarly, we may forget those terms not involving a power of $y$, since all of them are strictly negative as well. Thus, we may continue our estimations
\begin{align*}
<&-\left( 2 \,\pi^3 \,y^2 -6 \,\pi^2 \,y\right)e^{-\pi y} 
+\left( 4 \,\pi^3\, y^2+ 12\, \pi^2 \,y \right)e^{-2 \pi y}\\
&+ \left( 379 \,\pi^3 \,y^2 +402\, \pi^2\, y \right)e^{-8 \pi y}
+\left( 738 \,\pi^3 \,y^2 + 666\, \pi^2 \,y \right)e^{-9 \pi y}\\
&+\left( 1137 \,\pi^3\, y^2 + 507 \,\pi^2\, y \right)e^{-12 \pi y}.
\end{align*}
Now we extract a common factor $-y^2\,e^{-\pi y}$, and use the lower bound $y\geqslant3/2$ to continue
\begin{multline*}
\leqslant y^2\,e^{-\pi y}\left(-2\,\pi^3+4\,\pi^2+4\,\pi^3\,e^{-3\pi/2}+8\,\pi^2\,e^{-3\pi/2}\right.\\
\left.+379\,\pi^3\,e^{-21\pi/2}+268\,\pi^2\,e^{-21\pi/2}+738\,\pi^3\,e^{-24\pi/2}+444\,\pi^2\,e^{-24\pi/2}\right.\\
\left.+1137\,\pi^3\,e^{-33\pi/2}+338\,\pi^2\,e^{-33\pi/2}\right).
\end{multline*}
Finally, this last expression is easily seen to be $<-20\,y^2\,e^{-\pi y}$, and we are done.
\end{proof}

\begin{lemma}\label{theta3-asymptotics-with-notation}
Let $y\in\left[1,\infty\right[$. Then the $\vartheta_3$-function and its derivatives satisfy
\[0<\vartheta_{3,\nu}(y)<(-1)^\nu\,\vartheta_3^{(\nu)}(y)<\Theta_{3,\nu}(y)\]
for all $y\in\left[1,\infty\right[$, for each $\nu\in\left\{0,1,2,3,4\right\}$, where
\[\begin{cases}
\vartheta_{3,0}(y)=1+2\,e^{-\pi y}+2\,e^{-4\pi y}+2\,e^{-9\pi y},\\
\vartheta_{3,1}(y)=2\pi\,e^{-\pi y}+8\pi\,e^{-4\pi y}+18\pi\,e^{-9\pi y},\\
\vartheta_{3,2}(y)=2\,\pi^2\,e^{-\pi y}+32\,\pi^2\,e^{-4\pi y}+162\,\pi^2\,e^{-9\pi y},\\
\vartheta_{3,3}(y)=2\,\pi^3\,e^{-\pi y}+128\,\pi^3\,e^{-4\pi y}+1458\,\pi^3\,e^{-9\pi y},\\
\vartheta_{3,4}(y)=2\,\pi ^4\,e^{-\pi y}+512\,\pi^4\,e^{-4\pi y}+13122\,\pi^4\,e^{-9\pi y},
\end{cases}\]
and
\[\begin{cases}
\Theta_{3,0}(y)=1+2\,e^{-\pi y}+2\,e^{-4\pi y}+3\,e^{-9\pi y},\\
\Theta_{3,1}(y)=2\pi\,e^{-\pi y}+8\pi\,e^{-4\pi y}+19\pi\,e^{-9\pi y},\\
\Theta_{3,2}(y)=2\,\pi^2\,e^{-\pi y}+32\,\pi^2\,e^{-4\pi y}+163\,\pi^2\,e^{-9\pi y},\\
\Theta_{3,3}(y)=2\,\pi^3\,e^{-\pi y}+128\,\pi^3\,e^{-4\pi y}+1459\,\pi^3\,e^{-9\pi y},\\
\Theta_{3,4}(y)=2\,\pi ^4\,e^{-\pi y}+512\,\pi^4\,e^{-4\pi y}+13123\,\pi^4\,e^{-9\pi y}.
\end{cases}\]
\end{lemma}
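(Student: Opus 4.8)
The plan is to exploit the fact that each $\vartheta_{3,\nu}$ is nothing but the term-by-term differentiated theta series truncated at $n=3$, and that $\Theta_{3,\nu}$ differs from this truncation only by adding the single quantity $\pi^\nu\,e^{-9\pi y}$ to the coefficient of the $n=3$ term. First I would record the expansions. Differentiating $\vartheta_3(y)=1+2\sum_{n=1}^\infty e^{-\pi n^2y}$ term by term, which is justified since the series and all its derivatives converge uniformly on $\left[1,\infty\right[$, gives for $\nu\in\{1,2,3,4\}$
\[(-1)^\nu\,\vartheta_3^{(\nu)}(y)=2\,\pi^\nu\sum_{n=1}^\infty n^{2\nu}\,e^{-\pi n^2y},\]
a sum of strictly positive terms, while for $\nu=0$ one simply has $\vartheta_3(y)=1+2\sum_{n=1}^\infty e^{-\pi n^2y}$. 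A direct comparison of coefficients shows that in every case $\vartheta_{3,\nu}(y)$ is exactly the partial sum over $n\in\{1,2,3\}$ of the right-hand side (together with the constant $1$ when $\nu=0$).

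The lower bound is then immediate. Since
\[(-1)^\nu\,\vartheta_3^{(\nu)}(y)-\vartheta_{3,\nu}(y)=2\,\pi^\nu\sum_{n=4}^\infty n^{2\nu}\,e^{-\pi n^2y}>0\]
(with the obvious analogue for $\nu=0$), the strict inequality $\vartheta_{3,\nu}(y)<(-1)^\nu\,\vartheta_3^{(\nu)}(y)$ holds on all of $\mathbb R_+$, and $\vartheta_{3,\nu}(y)>0$ is clear as a sum of positive terms. Thus the only substantive work is the upper bound.

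For the upper bound I would observe that, by construction, $\Theta_{3,\nu}$ exceeds the $n\leqslant3$ truncation precisely by $\pi^\nu\,e^{-9\pi y}$, so the inequality $(-1)^\nu\,\vartheta_3^{(\nu)}(y)<\Theta_{3,\nu}(y)$ is, after dividing by $\pi^\nu\,e^{-9\pi y}$, equivalent to
\[2\sum_{n=4}^\infty n^{2\nu}\,e^{-\pi(n^2-9)y}<1.\]
Since $n^2-9>0$ for $n\geqslant4$, each summand is a decreasing function of $y$ on $\left[1,\infty\right[$; bounding $e^{-\pi(n^2-9)y}\leqslant e^{-\pi(n^2-9)}$ for $y\geqslant1$ reduces the claim to a single numerical inequality at $y=1$. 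Using $n^2-9=(n-3)(n-4)+7(n-3)\geqslant7(n-3)$ for $n\geqslant4$, the tail is dominated by $2\sum_{m=1}^\infty(m+3)^{2\nu}\,e^{-7\pi m}$, a rapidly converging geometric-type series. In the worst case $\nu=4$ its leading term is $2\cdot4^8\,e^{-7\pi}\approx3\cdot10^{-5}$ and all later terms are negligible, so the bound sits far below $1$ with an enormous margin.

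The argument is essentially routine; the only point needing any care is the final tail estimate, where the polynomial factor $n^{2\nu}$ must be controlled uniformly over $\nu\leqslant4$ against the rapid decay of $e^{-\pi(n^2-9)y}$. The geometric domination above handles this cleanly, and because the resulting bound is smaller than $1$ by several orders of magnitude, no delicate numerics are required and the conclusion follows for all $\nu\in\{0,1,2,3,4\}$ simultaneously.
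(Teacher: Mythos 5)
Your proposal is correct, and its overall structure matches the paper's: the lower bounds are dismissed as trivial truncations of the (term-by-term differentiated) theta series, and the whole content is the observation that $\Theta_{3,\nu}-\vartheta_{3,\nu}=\pi^\nu e^{-9\pi y}$, so that the upper bound amounts to showing the tail $2\pi^\nu\sum_{n\geqslant4}n^{2\nu}e^{-\pi n^2y}$ is below $\pi^\nu e^{-9\pi y}$. Where you differ is in how that tail is estimated. The paper replaces the sum over squares $n^2\geqslant16$ by the larger sum $2\pi^\nu\sum_{n\geqslant16}n^\nu e^{-\pi ny}$ over \emph{all} integers $n\geqslant16$, compares this with the integral $2\pi^\nu\int_{15}^\infty t^\nu e^{-\pi ty}\,\mathrm dt$, and evaluates that integral explicitly (doing $\nu=4$ in detail and asserting the other cases are similar). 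You instead factor $n^2-9=(n-3)(n+3)\geqslant7(n-3)$, dominate the tail by the rapidly convergent series $2\sum_{m\geqslant1}(m+3)^{2\nu}e^{-7\pi m}$ after reducing to $y=1$ by monotonicity, and note that since $(m+3)^{2\nu}$ is increasing in $\nu$ the single worst case $\nu=4$ (first term $2\cdot4^8e^{-7\pi}\approx4\times10^{-5}$) covers all five values of $\nu$ at once. Your route avoids both the integral computation and the case-by-case verification, at the cost of nothing; the paper's route is equally elementary but requires five (similar) explicit evaluations. Both arguments are sound and the margins are enormous either way.
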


\begin{proof}
Let $y\in\left[1,\infty\right[$ and $\nu\in\left\{0,1,2,3,4\right\}$. The lower bounds $0<\vartheta_{3,\nu}(y)<(-1)^\nu\,\vartheta_3^{(\nu)}(y)$ hold trivially as $\vartheta_{3,\nu}(y)$ are just the beginning of the Fourier series representation of $(-1)^\nu\,\vartheta_3^{(\nu)}(y)$. Thus, it is enough to prove the upper bounds involving $\Theta_{3,\nu}(y)$. This is achieved by estimating
\[
0<(-1)^\nu\,\vartheta_3^{(\nu)}(y)-\vartheta_{3,\nu}(y)
=2\,\pi^\nu\sum_{n=4}^\infty n^{2\nu}\,e^{-\pi n^2y}
<2\,\pi^\nu\sum_{n=16}^\infty n^\nu\,e^{-\pi ny}
<2\,\pi^\nu\int\limits_{15}^\infty t^\nu\,e^{-\pi ty}\,\mathrm dt,
\]
where we apply the simple fact that the function $t^\nu\,e^{-\pi ty}$ is strictly decreasing in $t$ for $t\in\left[15,\infty\right[$ for each fixed $y\in\left[1,\infty\right[$ and for each $\nu\in\left\{0,1,2,3,4\right\}$.

The proof is finished by showing that the expression $2\,\pi^\nu\int_{15}^\infty\ldots$ is smaller than $\pi^\nu\,e^{-9\pi y}$. In the case $\nu=4$ we get
\begin{multline*}
2\,\pi^4\int\limits_{15}^\infty t^4\,e^{-\pi ty}\,\mathrm dt
=\left(\frac{101250}{\pi y}+\frac{27000}{\pi^2\,y^2}+\frac{5400}{\pi^3\,y^3}+\frac{720}{\pi^4\,y^4}+\frac{48}{\pi^5\,y^5}\right)\pi^4\,e^{-15\pi y}\\
\leqslant\left(\frac{101250}\pi+\frac{27000}{\pi^2}+\frac{5400}{\pi^3}+\frac{720}{\pi^4}+\frac{48}{\pi^5}\right)e^{-6\pi}\,\pi^4\,e^{-9\pi y}
<\pi^4\,e^{-9\pi y}.
\end{multline*}
The other cases $\nu\in\left\{0,1,2,3\right\}$ are similar but slightly simpler.
\end{proof}

\section{A technical lemma on $\eta$}\label{technical-eta-section}

Proofs of the polynomization results in Section \ref{polynomization} are based on some good properties of the function $\log\eta(\exp)$ on the real line. The crucial features of this function are listed in the following theorem.

\begin{theorem}\label{subtle-properties-of-eta}
Let $f\colon\mathbb R\longrightarrow\mathbb R$ be defined by $f(x)=\log\eta(e^x)$ for all $x\in\mathbb R$. The function $f$ is a real-analytic strictly concave function, and the second derivative $f''$ is a strictly concave even function. Furthermore, the third derivative $f'''$ is a strictly decreasing odd function.
\end{theorem}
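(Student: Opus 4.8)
The plan is to reduce all four assertions to two pointwise inequalities, namely $f''(x)<0$ and $f''''(x)<0$ for every $x\in\mathbb R$, and then to establish these by a $q$-series computation combined with a simple positivity observation. First, real-analyticity is immediate: the infinite product defining $\eta$ converges locally uniformly together with all its derivatives to a positive real-analytic function of $y=e^x>0$, so $\log\eta(e^x)$ is real-analytic. Next, the even/odd symmetries come straight from the modularity relation $\eta(1/y)=\sqrt y\,\eta(y)$: writing $y=e^x$ gives $f(-x)=\tfrac x2+f(x)$, and differentiating this identity two, three and four times shows that $f''$ is even, $f'''$ is odd, and $f''''$ is even. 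Consequently ``$f$ strictly concave'' means $f''<0$, while both ``$f''$ strictly concave'' and ``$f'''$ strictly decreasing'' mean exactly $f''''<0$; moreover, since $f''$ and $f''''$ are even, it suffices to prove these two inequalities for $x\geqslant0$.

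For the inequalities I would start from the Lambert series obtained by expanding the logarithm of the product,
\[\log\eta(y)=-\frac{\pi y}{12}+\sum_{n=1}^\infty\log\!\left(1-e^{-2\pi n y}\right)=-\frac{\pi y}{12}-\sum_{k=1}^\infty\frac{\sigma(k)}{k}\,e^{-2\pi k y},\]
where $\sigma(k)$ is the sum of the divisors of $k$ and the rearrangement into a sum over $k$ uses $\sum_{m\mid k}1/m=\sigma(k)/k$. Setting $y=e^x$ and differentiating termwise (which is justified because each differentiation only produces polynomial factors in $k$ and $e^x$ that are dwarfed by the factor $e^{-2\pi k e^x}$, so the differentiated series converge locally uniformly), and noting that $w_k=2\pi k e^x$ satisfies $\mathrm dw_k/\mathrm dx=w_k$ so that repeated differentiation amounts to applying the operator $w\,\mathrm d/\mathrm dw$, I obtain for every $n\geqslant1$
\[f^{(n)}(x)=-\frac{\pi}{12}\,e^x-\sum_{k=1}^\infty\frac{\sigma(k)}{k}\,e^{-w_k}\,p_n(w_k),\qquad p_n(w)=\sum_{j=1}^n(-1)^j S(n,j)\,w^j,\]
where $S(n,j)$ are the Stirling numbers of the second kind. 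In particular $p_2(w)=w^2-w$ and $p_4(w)=w^4-6w^3+7w^2-w$.

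The decisive step is then a positivity statement on precisely the range to which the reduction has confined us. For $x\geqslant0$ we have $w_k=2\pi k e^x\geqslant2\pi>6$ for every $k\geqslant1$, and on this range both polynomials are strictly positive: indeed $p_2(w)=w(w-1)>0$ since $w>1$, while $p_4(w)=w\bigl(w^3-6w^2+7w-1\bigr)=w\bigl(w^2(w-6)+(7w-1)\bigr)>0$ because $w>6$ makes both $w^2(w-6)$ and $7w-1$ positive. Hence, for $x\geqslant0$, every summand $-\tfrac{\sigma(k)}{k}\,e^{-w_k}p_n(w_k)$ with $n\in\{2,4\}$ is strictly negative, and together with the strictly negative leading term $-\tfrac{\pi}{12}e^x$ this yields $f''(x)<0$ and $f''''(x)<0$ for all $x\geqslant0$. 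By the evenness of $f''$ and $f''''$ established above, these inequalities hold for all $x\in\mathbb R$, which proves every claim of the theorem.

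I expect the only genuine difficulty to be controlling the signs of $f''$ and $f''''$, since the polynomials $p_2$ and $p_4$ do change sign near the origin and so the series are not termwise negative on all of $\mathbb R$. This obstacle is dissolved by first using modularity to restrict attention to $x\geqslant0$, where the substitution pushes every $w_k$ beyond $2\pi$ and the relevant polynomials become positive; the remaining points about term-by-term differentiation and convergence are routine.
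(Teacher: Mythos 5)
Your proposal is correct and follows essentially the same route as the paper: reduce via the modularity relation to proving $f''<0$ and $f''''<0$ on $\left[0,\infty\right[$, expand $\log\eta$ as an exponential series, and check that the second and fourth $x$-derivatives of each term $e^{-w}$ with $w=2\pi k e^x\geqslant 2\pi$ have the right sign because $w^2-w$ and $w^4-6w^3+7w^2-w$ are positive for $w>6$. The only difference is cosmetic bookkeeping (collecting the double sum into a Lambert series with coefficients $\sigma(k)/k$ and organizing the derivatives via Stirling numbers, where the paper keeps the double sum over $m,n$ and differentiates directly), and the justification of termwise differentiation is handled adequately.
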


\begin{proof}
It is clear that $f$ is real-analytic. Taking logarithms in the modularity relation of $\eta$ gives
\[f(-x)=\frac x2+f(x)\]
for all $x\in\mathbb R$. Differentiating two, three and four times gives
\[f''(-x)=f''(x),\qquad f'''(-x)=-f'''(x),\qquad\text{and}\qquad f''''(-x)=f''''(x),\]
respectively, for all $x\in\mathbb R$. Furthermore, the desired properties of $f'''$ follow immediately from the desired properties of $f''$. Thus, it only remains to prove that $f$ and $f''$ are both strictly concave functions. Furthermore, since $f''$ and $f''''$ are even, it is enough to prove that $f''(x)<0$ and $f''''(x)<0$ for all $x\in\left[0,\infty\right[$.

We will start with the series representation
\[f(x)=-\frac{\pi e^x}{12}+\sum_{n=1}^\infty\log\left(1-e^{-2\pi ne^x}\right),\]
which converges absolutely for all $x\in\mathbb R$, and uniformly in any bounded interval of $\mathbb R$. We recall the Taylor expansion
\[\log\left(1-z\right)=-\sum_{m=1}^\infty\frac{z^m}m,\]
which holds for all $z\in\left]-1,1\right[$, and where the series on the right converges absolutely for all such $z$, and uniformly when $z$ is restricted to a closed subinterval of $\left]-1,1\right[$. Using this expansion we may continue by writing
\[f(x)=-\frac{\pi e^x}{12}-\sum_{n=1}^\infty\sum_{m=1}^\infty\frac1m\,e^{-2\pi mne^x}.\]
To prove that $f$ is srictly concave, it is enough to show that the expression $\exp(-2\pi mne^x)$ gives a strictly convex function for $x\in\left[0,\infty\right[$. The second derivative of the expression is
\begin{align*}
\frac{\mathrm d^2}{\mathrm dx^2}\left(e^{-2\pi mne^x}\right)
&=2\pi m n\,e^x\,e^{-2\pi mne^x}\left(-1+2\pi mn\,e^x\right),
\end{align*}
and this is strictly positive since $2\pi mn\,e^{x}\geqslant2\pi>1$ for all $x\in\left[0,\infty\right[$, $m\in\mathbb Z_+$ and $n\in\mathbb Z_+$.

To prove that $f''''$ is strictly concave, it is enough to prove that the above expression $\exp(-2\pi mne^x)$ has a strictly positive fourth derivative for $x\in\left[0,\infty\right[$. Its 
fourth derivative is
\begin{align*}
&\frac{\mathrm d^4}{\mathrm dx^4}\left(e^{-2\pi mne^x}\right)
=\left(-1+14\,\pi mn\,e^x-24\,\pi^2\,m^2\,n^2\,e^{2x}+8\,\pi^3\,m^3\,n^3\,e^{3x}\right)2\pi mn\,e^x\,e^{-2\pi mne^x}.
\end{align*}
But this last expression is easily seen to be strictly positive, since for all $x\in\left[0,\infty\right[$, $m\in\mathbb Z_+$ and $n\in\mathbb Z_+$ we may easily estimate
\[14\,\pi mn\,e^x\geqslant14\,\pi>1\]
as well as
\[8\,\pi^3\,m^3\,n^3\,e^{3x}>24\,\pi^2\,m^3\,n^3\,e^{3x}\geqslant24\,\pi^2\,m^2\,n^2\,e^{2x},\]
and we are done.
\end{proof}

\end{document}